\definecolor{mgreen}{RGB}{25,147,100}
\definecolor{shadecolor}{rgb}{1,.8,.1}
\definecolor{shadecolor2}{RGB}{245,237,0}
\definecolor{orange}{RGB}{255,137,20}
\definecolor{orange}{RGB}{245,37,100}
\theoremstyle{plain}
\newtheorem{theorem}{Theorem}[section]
\newtheorem{prop}{Proposition}[section]
\newtheorem{corr}{Corollary}[section]
\theoremstyle{definition}
\newtheorem{definition}{Definition}[section]
\theoremstyle{definition}
\newtheorem{remark}{Remark}[section]
\newtheorem{exmp}{Example}[section]
\providecommand{\keywords}[1]{\textbf{\textit{Keywords:---}} #1}
\title{\textbf{An energy-based interaction model for population opinion dynamics with topic coupling}}
\author[1]{Hossein Noorazar \thanks{hnoorazar@math.wsu.edu}}
\author[1]{Matthew J. Sottile\thanks{msottile@math.wsu.edu}}
\author[1]{Kevin R. Vixie\thanks{vixie@speakeasy.net}}
\affil[1]{Department of Mathematics and Statistics, Washington State University}
\date{}
\providecommand{\keywords}[1]{\textbf{\textit{Keywords:}} #1}
\begin{document} 

\maketitle

\begin{abstract}
  We introduce a new, and quite general variational model for opinion
  dynamics based on pairwise interaction potentials and a range of
  opinion evolution protocols ranging from random interactions to
  global synchronous flows in the opinion state space.  The model
  supports the concept of topic ``coupling'', allowing opinions held
  by individuals to be changed via indirect interaction with others on
  different subjects.
  Interaction topology is governed
  by a graph that determines interactions. Our model, which is really
  a family of variational models, has, as special cases, many of the
  previously established models for the opinion dynamics.

  After introducing the model, we study the dynamics of the special
  case in which the potential is either a tent function or a constructed
  bell-like curve. We find that even in these relatively simple potential
  function examples
  there emerges interesting behavior. We also present results of
  preliminary numerical explorations of the behavior of the model to motivate
  questions that can be explored analytically.
\end{abstract}

\keywords{Opinion Game, Opinion dynamics, Naming Game, Social Interaction}

\section{Introduction}

An \emph{Opinion Game} is a mathematical model of the evolution of the
opinions of agents within a population, from creation of an idea to
its diffusion through the population and finally the ultimate state of
the system.  Opinion games (or opinion dynamics models) have been
studied since the middle of the 20th century.  They often are used to
study the dynamics of generic opinions that flow within a population.

By creating a formal mathematical model we can study detailed
hypotheses about transmission mechanisms and the impact of population
structure on opinion formation. Clearly such models are idealized and
fail to capture all the nuances and fuzziness of individual human
thought, but they give at least a disciplined glimpse into an aspect
of how agents interact. A formal analytic approach also admits the
ability to establish theorems about opinion dynamics, allowing one to
derive conclusions with a rigorous chain of formal reasoning
supporting them. This differs from approaches that take a purely
simulation-based approach in which conclusions are derived from
statistical properties of the resultant data instead of the underlying
dynamical rules dictating interactions. By building a solid foundation
for the opinion dynamics we get connection between model structure and
statistical properties.

Models typically vary in the assumptions that they make about the
nature of individual interactions and the overall goal of the
population in the limit of an arbitrary number of interactions.  In
some cases, models are used to study specific classes of opinions that
are created and flow within a population (such as linguistic
games\cite{Baronchelli20101,kay2003, Baronchelli2010, Puglisi2008}).
In other cases, the model seeks to explore
equilibrium states in which not all agents necessarily agree, but
reach opinion states that no longer change\cite{FRENCH1956}. The
generalized version of these games can be split into two categories. A
binary (or more generally, discrete) opinion
model~\cite{Follmer,Ding2010,Galam1991,Sznajd2000,Dietrich2003,Martins2007,GalamDiscrete}
requires agents to negotiate an opinion on a topic that has two
possibilities and therefore, the final state of the system divides
agents into two distinct groups, see \cite{Stauffer2003},
\cite{Biswas2009}. The alternative is a system in which the opinion
space is continuous~\cite{Weisbuch2002, Fortunato, Deffuant2000,
  Hegselmann2002}. For example ``\textit{How good do you think the New
  York Times is as a media outlet?}'' does not have a binary answer --
you might strongly believe it is a strictly good or bad source of
information, or you may have a mixed opinion somewhere in between. An
opinion game model allows us to study the state reached after a set of
individuals interacts an arbitrary number of times. This final state
may be consensus or there may emerge stable sub-populations
(\emph{clusters}) in which members of each group reach consensus, but
difference in opinion exist across these groups. The concept of
opinion can be extended to higher dimensions, allowing for more than
two extreme positions to be taken on a topic (e.g., ``what is your
favorite color?'').

\subsection{Understanding social dynamics}

The dynamics of an opinion game are dictated by the micro-dynamics
that occur in individual-to-individual interactions between agents
connected by some network topology. At the micro-scale we would
represent the value of each opinion held by each individual, and how
opinions of two individual can change and evolve via direct
interaction. An example would be the one in which both agents have the
same power over each other and therefore, after an interaction they
would learn equally from each other and move relative to their
interaction partner by the same amount either in the direction of
increased agreement or disagreement or do not change whatsoever. For
example, repulsion could occur when two agent's opinion are too far
apart, especially when they talk about sensitive topics such as
religion or politics.  In some situations we may wish to model
interactions where a difference of opinion between two individuals is
insufficiently large to cause either attractive or repulsive behavior
in which no change occurs. In yet another situation we may model
dynamics in which opinions change only when they are sufficiently
different in a repulsive fashion, but with no attractive behavior when
they are near by. We will see how these can be modeled in
section~\ref{sec:interaction-potentials} via an interaction rule that
seeks to minimize an interaction energy via \emph{potential functions}.

\subsubsection{Asymmetry of influence}

In the real world, it is frequently the case that agents do not have
equivalent influence over each other. Such asymmetric relationships
are commonplace: teacher/student, parent/child, expert/amateur, and so
on. This can be modeled by applying weights, or \emph{influence
  power}, to the relationship between agents. In this case the
movements of the agents are not the same. For instance, if two agents
are talking about chemistry and one is a chemist and the other is a
student, then we can add an influence factor to the interaction and
make the chemist to have more influence on the student (expert
power). The result of such weighted influence would cause the student
to experience significantly higher changes of opinion than the teacher
during an interaction (which is part of the learning process), while
allowing the teacher to admit some small level of change as a part of
their interaction to maintain consistency in what they teach while
allowing for some level of flexibility in response to their
environment.

In an extreme case, we can consider the situation in which one agent
is absolutely dominant and acts purely as a \emph{speaker} and the
other is acting purely as a \emph{hearer} such that the opinion is
transmitted unidirectionally. Consequently after an interaction,
speaker would not change her opinion and hearer is the only one who
changes. This allows for the modeling of information sources such as
news or propaganda outlets that act in a purely influencing role
within the population.  It is worth mentioning that influence of
agents over each other might change over time and evolve, we refer
enthusiastic reader to~\cite{Friedkin2011}.  Expert power is one
mechanism by which asymmetry is introduced into an opinion model and
is applied on the interaction \textit{between} two individuals.

\subsubsection{Tendency for individualism}

The dynamics that emerge often fail to model phenomena
that appear in real social systems due to decisions made solely by
individuals. We can add more ingredients such as adaptive noise to
agents' decisions to capture these decisions.  This noise can be a
function of the collective opinion within the population or cluster
that an individual finds themselves a member of.  For example,
individuals that find themselves in a population that all hold the
same opinions may have a desire to express their individuality by
making an individual internal decision to change their opinion
slightly to differentiate themselves.  Exactly when and how this
desire for individuality will emerge and balance with a desire to
conform and form coherent like-minded groups is question open to
investigation.

Different real world phenomena emerge due to this desire for
individualism such as fads and trends, as well as the spontaneous
emergence of new trends. In the fashion world, when a new style
becomes popular there is a period in which individuals adapt their
opinions about fashion towards this popular style.  Eventually when
this style becomes widespread and relatively commonplace, some
individuals suddenly express a desire to be unique and begin avoiding 
this style. The more popular the style becomes, the more
powerful would be tendency of peers to try something new and leave the
cluster. This force is referred to as a ``\textit{disintegrating
  force}'' or ``\textit{tendency for individualism}'' or
``\textit{tendency for uniqueness}''\cite{Mas2010}. Many formal models
of opinion dynamics introduced in the 20th century are based on the
assumption that such disintegrating forces are not present, which
yields interesting but relatively synthetic idealized opinion
dynamics and equilibria.

\subsubsection{Coupling of topics}

It is rarely the case that opinions on specific topics exist in
isolation.  An opinion about one topic frequently influences the
opinion an individual holds on other topics.  These range from the
trivial (e.g., a preference for musical style influencing ones like or
dislike of a given musician in a given genre), to those that reflect
complex social influences (e.g., an opinion about religious belief
influencing opinions about reproductive rights). Mathematically we can
treat opinions that change together as being \emph{coupled}, where
there exists some functional that defines how the state of two or more
coupled opinions co-evolve.  In systems such as linguistic games to
determine color terminology, coupling rarely has an impact, but to
model social topics that have overlap at some level, coupling is an
unavoidable aspect of capturing the complexity and nuance of human
behavior. There are some works \cite{Li,Fortunato,Touri} in which
several topics are being discussed in the system, but the topics are
independent and coupling is not present. A recent work~\cite{Parsegov2015} studies coupling of interdependent topics for the Friedkin-Johnsen (FJ) model. A dynamical system view is
presented in \cite{Touri} along with an upper bound for convergence
time of the Hegselmann-Krause model~\cite{Hegselmann2002,blondel}.

\subsubsection{Interconnection topology}

Geographic and social factors dictate the likelihood of any two
individuals interacting.  This can be encoded via an interaction
network.  Given a population of individuals, the likelihood of any
pair of individuals interacting is strongly influenced by their social
and geographic network.  Agents who are friends are more likely to
interact than agents who are not, as are agents who are nearby versus
those that are physically separated by a large distance.  In our
model, we impose a graph topology on all individuals who can interact
in which the connectivity of the graph dictates which pairwise
interactions are possible.  For example, a tightly knit community of a
small number of individuals likely can be treated as a fully connected
graph in which all individuals may interact with each other.  On the
other hand, a set of individuals in widely distributed cities who do
not interact with each other but read the same newspaper can be
modeled as a star graph.  Realistic interaction networks can be
derived from social networking sites such as Twitter or Facebook,
allowing for connectivity graphs that mimic interactions that are
encountered in the real world.



\subsection{Contributions}

The model we present in this paper makes a number of noteworthy contributions to the study of opinion dynamics building upon decades of related research (Detailed in Sec.~\ref{sec:relatedwork}).

\begin{itemize}

\item The treatment of opinion evolution as a result of minimization
  of a (potentially nonlinear) energy functional based on the
  difference of opinion on a topic between interacting
  individuals. (Sec.~\ref{sec:interaction-potentials})

\item Demonstration that this model is a generalization of a family of
  previously published models from the literature by embedding them within
  the framework that we present.  (Sec.~\ref{sec:embedding})

\item Coupling of topics based on kernel-based coupling functions, to
  model the influence of explicit communication on one or more topics
  on a broader set of opinions held on topics not communicated.
  (Sec.~\ref{Topic-coupling})

\item A study of the dynamical systems properties of our model to
  understand the existence of regions of the opinion space that
  represent basins of attraction, fixed points and
  sensitivity. (Sec.~\ref{sec:dynamics})

\end{itemize}

A set of results of computational experiments are provided
demonstrating some of the dynamics one can observe in these models in
addition to the analytical results that we derive.  We close by posing
a set of directions of potential future research based on our model
and our early explorations with it.

\section{Related research}
\label{sec:relatedwork}

Models of opinion dynamics have been studied over a long period of
time with various formal methods and goals.  In this section we
provide a short review of noteworthy works in this history and discuss
how our work is novel relative to prior work.  This review is not
intended to be exhaustive.

\subsection{French model: early formal modeling of opinion dynamics}

One of the earliest works in this area appears in
1956~\cite{FRENCH1956}, in which French considers agents as particles
in a physical system which can induce forces on each other referred to
as \emph{interagental power}.  Interagental power can be characterized
in five ways: attraction power, expert power, reward power, legitimate
power and coercive power. Let $v_1$ and $v_2$ be two agents in the
network, then attraction power of $v_1$ over $v_2$ is how much $v_2$
likes $v_1$. Agents tend to listen to those whom they like and
respect. Expert power is based on $v_2$'s understanding of $v_1$'s
knowledge.  The student/professor example is the case in which a
student believes their teacher knows more on a given subject, and
therefore is influenced by their teacher.  Reward power is based on
the right of an individual to offer tangible rewards of any kind to
somebody for doing what is wanted or expected of them. For example,
parent might give a gift to a child for getting a good GPA in a year,
or a boss can promote one of their employees. A negative example of
reward power could be taking away a child's toy for not performing well
in school. Coercive power is based on $v_1$'s ability to impose
sanctions on $v_2$, it uses the threat of force to gain compliance
from another. The idea is that someone is forced to do something that
he/she does not desire to do. For instance in a court room the judge
has utmost absolute power over a defendant, or a boss has power over
their employee. And finally legitimate power is based on how much $v_2$
believes $v_1$ has a right to prescribe their opinion. It comes from an
elected or appointed position of authority and is based on the social
norm which requires agents to be obedient to those who hold superior
positions social structure.

At a given time, each individual is forced by the cumulative force
applied by all of their neighbors and their own resistance to move
towards an equilibrium point where the resultant force is zero. The
result is linear model that determines behavior of the system and
individuals. In this model, the tendency of individuals is to move
towards a weighted mean of their opinions at each step.  French also
explores the role of connectivity between individuals by applying the
model on different types of directed graphs, where direction of an
edge denotes a unidirectional (``speaks to'') relation.


\subsection{DeGroot's averaging model and social influence evolution}

The model described by DeGroot\cite{DeGroot1974} is a simplified
version of the model we describe in this paper.  The core concept of
this model is that for a population of $n$ individuals, their set of
opinions on one topic at any given point in time can be represented as
a vector $\mathbf{y}^{(t)} \in \mathbb{R}^n$.  An $n$-by-$n$
row-stochastic matrix of influence weights, $\mathbf{P}$, is provided
in which the element $\mathbf{P}_{ij}$ corresponds to the amount of
influence that individual $j$ has on individual $i$ when updating
their opinion.  The diagonal elements $\mathbf{P}_{ii}$ represent each
individuals opinion of their own degree of influence within the
population, which can be interpreted as their confidence in their own
opinion relative to their peers.  The update process is simply
matrix-vector multiplication:

\begin{equation}
  \label{eq:DegrootEq}
  \mathbf{y}^{(t+1)} = \mathbf{P} \mathbf{y}^{(t)}
\end{equation}

In this model, given a set of initial opinions the result will
converge to the single weighted average of the initial opinions.  As
shown in Figure~(\ref{fig:degrootevolution}), for a full graph the convergence
is very regular for all individuals.  When a random graph is used, the
system still ends at an average value, but the individual trajectory
to this final value is more interesting (not necessarily monotonic).
Our model subsumes this behavior, and supports behavior that DeGroot's
model cannot capture.  For example, DeGroot's model implies that all
individuals will seek consensus - no matter how far apart the starting
opinions are, they will come together.  Our model captures the case
where sufficiently different opinions on a topic may result in an
increase in disagreement (e.g., polarization of opinions) in which the
final outcome is not a single value but two subpopulations that reach
two distinct opposing opinions.  We illustrate this in our
experimental results shown in Sec.~\ref{experiments}. As is clear
from Eq.~(\ref{eq:DegrootEq}), each node updates their opinion based on
their own and their neighbors opinions. An extension model in which each
agent incorporates their initial opinion throughout all iterations
was developed by Friedkin and Johnsen~\cite{Friedkin1999,FriedkinBook1,FriedkinBook2}.

\begin{figure}[httb!]
  \centering
  \includegraphics[width=.35\linewidth]{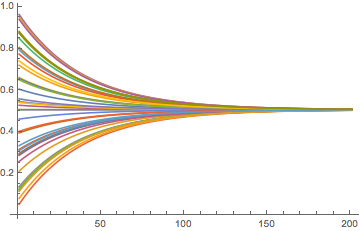}
\caption{Evolution of the DeGroot model from a set of initial opinions distributed over $[0,1]$ to consensus.}
\label{fig:degrootevolution}
\end{figure}

Moreover, an application of DeGroot's model, the DeGroot-Friedkin,
model has recently been studied with respect to the evolution of
social influence in work by Jia, et al.~\cite{jia2015}. In French's
model the influences are constant over
time. In~\cite{jia2015,Lorenz2005} authors model a new version of the
game where influences of agents evolve depending on the outcome of the
previous issue's result.

\subsection{The Bounded Confidence Model}

There are several different models for explaining and modeling the
reasons and dynamics of phenomena such as the reason why agents tend
to maintain their differences while they learn and become more alike
by interacting. Why does the process of becoming more alike stop
before a complete convergence to homogeneity of humans attributes?  In
1997, Axelrod injected the idea of homophily in \cite{Axelrod} to the
opinion game which is referred to as \textit{``bounded confidence"}
model (BCM).  In this model, individuals talk to agents who are
similar to them and consequently become even more similar and agents
whose opinions are too different do not interact. The pairwise
interaction was developed later on by Deffuant~\cite{Deffuant2000} and
a synchronized version is investigated by~\cite{Krause2000,
  dittmer2001consensus,blondel,Lorenz2007}. The influence of network
topology on BCM dynamics is investigated in~\cite{Weisbuch2004}.

In this model at each time step a random pair of neighbors is chosen
and the agents change their opinions if the difference of opinions are
less than a threshold. BCM succeeded in explaining the formation of
clusters, i.e. maintaining the differences between groups and not
coming to a total consensus, to some extent, but it was fragile and
sensitive.
BCM generates opinion clustering in the context of discrete,
categorical opinions rather than continuous opinions. Moreover, it is
sensitive to ``interaction noise'' in the sense that even if with
small probability agents talk to others whose opinions are not
similar, i.e. talk to agents from different clusters, then we would
see the emergence of a consensus monoculture or polarized final
state. Therefore, BCM cannot fully explain coexistence of several
different clusters in the system.

This is where Durkheim's theory \cite{Durkheim} comes into
play. Durkheim argues that the integrating forces that bind
society together and binds an individual to others is that each
individual adopts a linear combination of others' opinions. M{\"a}s in
\cite{Mas2010} adopts Durkheim's theory and at a given time $t$,
updates opinion of individual $i$ using weighted average of the
difference between opinion of agent $i$ and the rest of
population. In this manner, the system would not be sensitive to
interaction noise and therefore tendency for individualism can be
added to the system so that it would not hurt coexistence of several
clusters in the society.

Krause~\cite{Hegselmann2002} explains analytically and by simulations
what can be said about the final state of a game given its initial
profile for existing models in which opinion of each agent evolves as
weighted average of other agents in the network.  Different cases are
considered where the weights are held constant or evolve over
time. Negative ties, differentiation from and xenophobia toward those
who are different, and its polarization effect is explained in
~\cite{Flache2011,Proskurnikov,Altafini}. Moreover, Anahita
in~\cite{BCMDiverse} investigates the heterogeneous system, i.e the
model in which each node has their own confidence interval.

\subsection{Distributed behavior control and flocking models}

The set of opinions held by an individual on a set of topics can be
thought of as nothing more than the state within that individual with
respect to a set of variables.  The association with these states and
variables is largely an interpretation of this abstraction in order to
study a specific system.  Looking at other systems in which a set of
agents with state interact, we can find interesting related work not
originating in the study of opinion models in the context of
distributed control problems in robotics.  In~\cite{reif99} the
problem of distributed behavior control in autonomous robots is
investigated using a method similar to the energy potential model used
in our model.  This potential model for robotic control was introduced
even earlier by Khatib~\cite{khatib86} for obstacle avoidance
purposes.  In particular, the use of potential fields to model
interactions removes the need for any centralized controller or state.
Decision making can occur solely via pairwise interactions between
robots and updates within each individual robot based on these
pairwise interactions and knowledge about the interaction potential
field.  There is a striking similarity in the models of flocking
behavior (e.g., boids~\cite{reynolds86}) to those in opinion dynamics
in which a desire for population consensus (cohesion) is balanced with
individual attraction or repulsion depending on proximity.  Similar
noise injection mechanisms to the social tendency towards
individualism can also be found in flocking models to avoid having the
system fall into a steady state.

\section{Model Definition}
\begin{tcolorbox}
\begin{center}
\textbf{Notation used throughout the paper}
\end{center}
\begin{multicols}{2}
[
]
\begin{itemize}
\item $\psi$ potential function.
\item $\nabla \psi$ gradient of function $\psi$.
\item $\tau$ parameter for some potential functions like (tip of the) tent, (bend point of) BCM potential.
\item $\mathbb{O}$ opinion space.
\item $\mathbb{T}$ topic space, if topic space is finite, $|\mathbb{T}| = n$.
\item $o_i^{(t)}$ opinion of agent $i$ at time $t$ if there is only one topic in the system.
\item $o_i^{(t)}(s_k)$ opinion of agent $i$ at time $t$ about topic $s_k$.
\item $\dot{o}$ derivative of $o$ with respect to time.
\item $\Delta o_i^{(t+1)}(s_k) = o_i^{(t+1)}(s_k) - o_i^{(t)}(s_k) $.
\item  $d_{ij}^{(t)}(s) = o_i^{(t)}(s) - o_j^{(t)}(s)$. 
  
Difference between opinion of agents $i$ and $j$ about topic $s$ at a time $t$.
\columnbreak
\item $N(\mu, \sigma)$ normal distribution with mean $\mu$ and standard deviation $\sigma$.
\item $\alpha$ learning rate.
\item $\sigma(\mathbf{A})$ spectrum of matrix $\mathbf{A}$.
\item $\mathbf{J}$ Jacobian matrix.
\item $\mathbf{C}$ coupling matrix.
\item $c_{ij}$ inverse coupling strength of topic $i$ over topic $j$.
\item $G = (V,E)$ graph $G$ with set of nodes $V$ and set of edges $E$.
\item $n(i)$ set of neighbors of agent $i$.
\item $||.||$ Euclidean norm.
\item vectors and matrices will be bold letters.
\item $\mu_s$ mean of polarization counts
\item $\mu_p$ mean of stabilization time
\end{itemize}
\end{multicols}
\end{tcolorbox}

\subsection{Topic and opinion spaces}

First we will introduce definitions and associated notation that we will be using throughout this paper.  

\theoremstyle{definition}
\begin{definition}{}
The set of topics being negotiated in the model constitute the \textit{topic space}, denoted by $\mathbb{T}$.  The topic space is effectively an index set, and for $n$ topics we can assume $\mathbb{T} = \mathbb{Z}_n$, and for an uncountable set of
topics ordered on real line  we have $\mathbb{T} :=[0,L]$.

\end{definition}

\theoremstyle{definition}
\begin{definition}{}
The set of all possible (numerical) opinions, denoted by $\mathbb{O}$, 
is called \textit{opinion space}, and will be a subset of an m-dimensional hypercube
$ \mathcal{C}^m$, for some $m$.
\end{definition}

\begin{remark}[]
In this paper we focus on the case in which
$\mathbb{O} = \mathcal{C}^1= [0,1]$ and this leads to the
opinion state space $\mathbb{O}^N$ where $N$ is the population of the network.
\end{remark}

For a given topic we assign an opinion value to it from 
the opinion space. For example if the question is 
``How good is the New York Times as a news source?'' the opinion
 space would be $\mathbb{O} = [0,1]$.
 The interpretation of the endpoints is arbitrary, but we 
 will adopt the convention when discussing opinions 
 where zero corresponds to an absolutely negative 
 opinion about a topic and one corresponds to an 
 absolutely positive opinion.  For more complex topics 
 the opinion space is defined by either the $d-1$ 
 dimensional simplex where $d$ corresponds to 
 the number of extreme opinions that the topic 
 supports and an opinion would be any point 
 within the simplex, or d-dimensional hypercube 
 where each dimension is a topic's attribute. \\

For example, if we look at the the color space 
defined by three primary colors - Red, Green 
and Blue - which is called the color triangle, then 
the color space (opinion space) would be a 
triangle and favorite color of an individual is 
a single point lying within the triangle. 
This opinion space can be naturally embedded 
inside a hypercube of dimension at least 3.

On the other hand, we can take the RGB cube as our opinion space in
which each dimension corresponds to a hue with the intensity of each
hue varying between 0 and 1.  The favorite color of an individual is a
vector in $\mathbb{R}^3$ lying inside the cube.  The choice of
specific opinion space used to encode specific topics is a problem
dependent choice.

\theoremstyle{definition}
\begin{definition}{}
Opinion of an individual $i$ about topic $x$ at given time $t$ is 
denoted by $o_i^{(t)}(x)$. $o_i:  \mathbb{T} \rightarrow \mathcal{C}^m$ 
is a function of both time and topic.
\end{definition}
 We may drop each of the subscripts/superscripts when 
 the meaning is clear from the context.

\subsection{Interactions}

Individuals that interact to share and update their opinions are
connected via an \textit{interaction network} represented by a
graph. Let $G = ( V , E ) $ represent the network under consideration
where $V$ is the set of nodes and $E$ is the set of edges. The set of
all neighbors of node $i \in V$ is denoted by $n(i).$ In this
experimental results presented in this paper we only consider the fully
connected network.  One might consider different graphs such as star
graph, ring, random graph, etc. and study the effect of graph
measures, such as different centralities, on the opinion game.

Let $d_{ij}^{(t)}(x):= o_i^{(t)}(x) -o_j^{(t)}(x)$, 
$\Delta o_i^{(t+1)}(x) := o_i^{(t+1)}(x) - o_i^{(t)}(x) $ 
for arbitrary topic $x$, and $w_{ij}$ to be the amount 
of influence of node $i$ on node $j$. $w_{ji}$ and $w_{ij}$ 
are not necessarily the same as the influence that two individuals 
have on each other is not necessarily symmetric (as in the case of a teacher/student or parent/child relationship).

\theoremstyle{definition}
\begin{definition}{}
Let $G = ( V , E ) $  be a given network. A sub-graph of 
G, $\hat{G} = ( \hat{V} , \hat{E} ) $, where $\hat{V} \subset V$  and $\hat{E} \subset E$, 
is called a $\epsilon-$cluster if $\forall i \in \hat{V} \: \exists j \in \hat{V} s.t. \: |o_i - o_j | < \epsilon$.
\end{definition}

\subsection{Units of time}
\label{sec:time-units}

The model relates the state of opinions across the population 
to an abstract notion of time in order to study the evolution and 
dynamics of this opinion state.  As such we must carefully 
define what constitutes a unit of time and how this choice has
 an impact on the evolution of the model.  To start, we establish 
 that there exists a maximum speed at which information can be 
 propagated through the population in a single unit of time.  
 In this model where all individuals are connected by an interaction 
 network the minimum distance unit is a single edge in the network.  
 We adopt the convention then that the maximum distance that 
 information is allowed to flow is at most one edge from the origination 
 point of the information.

A single time unit therefore allows any valid set of interactions 
such that this restriction on information flow distance holds.  
Formally, evolution of the system in a time unit entails a subset 
of individuals $\hat{V} \subset V$ and subset of edges representing 
pairwise interactions $\hat{E} \subset E$. The restriction on information 
flow holds when for the subgraph of $G$, $\hat{G}=(\hat{V},\hat{E})$, 
there does not exist any path in $\hat{G}$ of length greater than 1.  
Additional restrictions may also be imposed if interactions are directed 
(e.g., in the case of a hearer and speaker where only the hearer updates their state), 
such as the in-degree of all vertices in $\hat{V}$ being restricted to at most 1.

\subsection{Interaction energy potentials}
\label{sec:interaction-potentials}

In our model the strength of an interaction is dictated by an
interaction \textit{potential function} $\psi$. While, in this paper
we focus on those potentials which depend only on differences of
opinions, our model does not require this in general. As a result, our
model has most of the previous models in opinion dynamics as special
cases. This is illustrated in~\ref{sec:embedding} where we show that the
Bounded Confidence model, the DeGroot model and the French model are
special cases of our model.

Examples of potential function are illustrated in
Fig.~(\ref{fig:example-potential}). The potential function determines
the degree to which individuals must react in adjusting their own
opinion depending upon the difference of their opinion
$d_{ij}^{(t)}(x)$ when interacting with another individual on topic
$x$.  The interaction rules that update opinions of individuals encode
this minimization by pushing individual opinions in the opposite
direction of the gradient of the interaction potential ($-\nabla
\psi$).
For example in the BCM, which is a special case of our model, 
individuals would learn from each other when their difference 
is less than a threshold and they do not interact otherwise. 
This potential function is illustrated in Fig.~(\ref{fig:BCMpoten}) 
with a threshold of $\tau$. 
The tent function illustrated in Fig.~(\ref{fig:tentfig1}) represents 
the case where the individuals whose opinions difference is 
less than $\tau$ attract each other by becoming closer and 
repel by becoming further apart if their difference is more than 
$\tau$. In the case of the skewed flat top tent potential, shown 
in Fig.~(\ref{fig:skewtentfig1}), if the difference is between $\tau_l$ 
and $\tau_r$ then individuals are indifferent and will not change 
their position. The flat top tent potential is similar to that of the 
bounded confidence model, except that it not only forces 
individuals to come closer together when they are sufficiently 
similar but also causes them to repel when they are sufficiently different.
\begin{figure}[httb!]
\begin{subfigure}{.35\textwidth}
  \centering
  \includegraphics[width=1\linewidth]{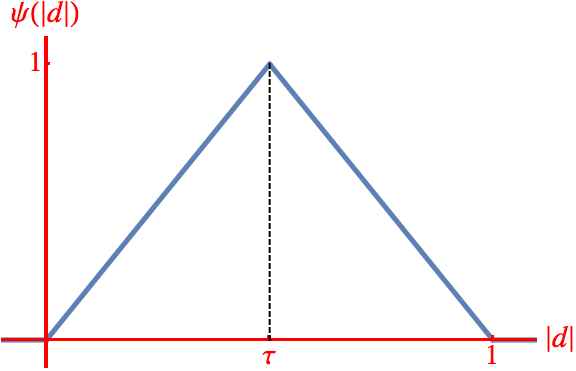}
  \caption{Tent Potential}
  \label{fig:tentfig1}
\end{subfigure}%
\begin{subfigure}{.35\textwidth}
  \centering
  \includegraphics[width=1\linewidth]{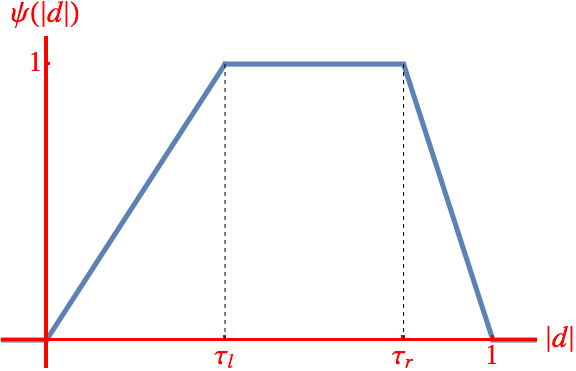}
  \caption{Skewed Flat Top Tent Potential}
  \label{fig:skewtentfig1}
\end{subfigure}%
\begin{subfigure}{.35\textwidth}
  \centering
  \includegraphics[width=1\linewidth]{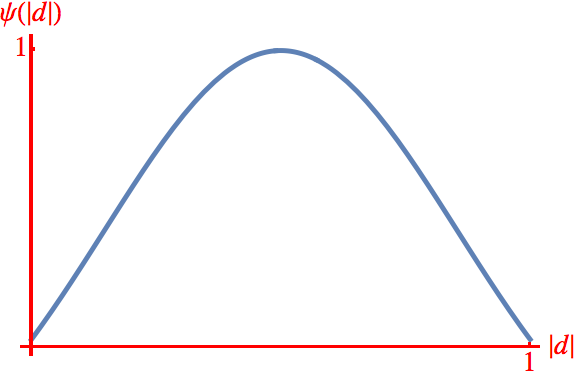}
  \caption{Gaussian Potential}
  \label{fig:sfig2}
\end{subfigure}
\begin{subfigure}{.35\textwidth}
  \centering
  \includegraphics[width=1\linewidth]{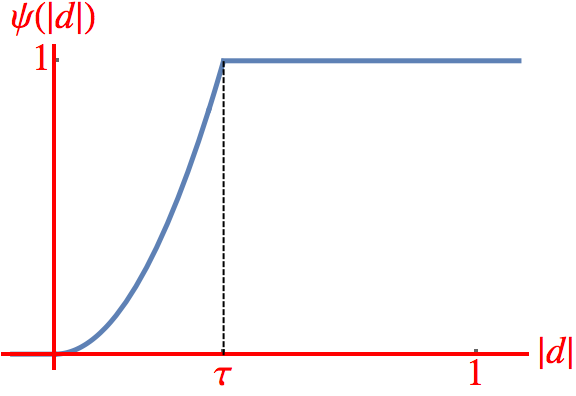}
  \caption{BCM Potential} 
   \label{fig:BCMpoten}
\end{subfigure}
\hspace{2.2in}
\begin{subfigure}{.35\textwidth}
  \centering
  \includegraphics[width=1\linewidth]{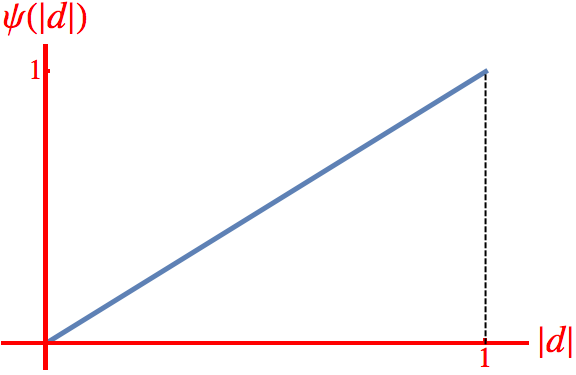}
  \label{fig:FrenchPoten}
\caption{A Simple Potential}
\end{subfigure}
\caption{Potential function examples}
\label{fig:example-potential}
\end{figure}

Given a pairwise update rule based on a potential function 
that dictates the attractive or repulsive relationship between 
two opinions, we must define an update rule for the entire population. 
Two options exist: one in which each step of the model involves 
only one pair of individuals, and one in which a maximal cohort 
is updated under a constraint dictated by the interaction network 
such that no individual makes more than one opinion update in 
a single step. In all cases the restrictions on information flow 
distance discussed in Sec.~\ref{sec:time-units} must be obeyed.

We will also discuss dynamics of the continuous game where 
the time is continuous and agents are continuously interacting.

\subsection{Individual and population update rules}

The model proceeds as a sequence of update steps in which 
one or more individuals update their opinion state based on a
 set of interactions.  Update rules can be either applied one-at-a-time, 
 in which random pairs of individuals interact, or in a concurrent 
 population wide fashion.  We start with the simplest case and build the model up. 
 The continuous game updates, of course, are done synchronously, like a physical system 
 with positively or negatively charged particles that are within reach
 of each other and exert force on one another. 

\subsubsection{One-at-a-time updates - Derivation of interaction update rule}

We let pairs of individuals interact and update their opinions. 
Suppose there is one topic, $x$, in the system and two individuals 
$i$ and $j$ are interacting at a time $t$. Let $\psi: \mathbb{R} \rightarrow \mathbb{R}$ 
be a bounded potential energy function which measures the energy in the opinion interaction:

\begin{equation}
Q_{ij}:= \psi(|d_{ij}|) = \psi(|o_i - o_j|)
\end{equation}
In order to reduce the energy of interaction, we have to move in 
the opposite direction of gradient of the energy function:

\begin{equation}
\frac{\Delta d_{ij}}{\Delta t} = -\alpha \frac{\partial}{\partial d}\psi(|d|)
\end{equation}
where $\alpha$ is a parameter that increases/decreases rate 
of change of opinions. We refer to $\alpha$ as the ``\textit{learning rate}''. 
This would give us the update rule as follows:

\begin{equation}
\frac{ \Delta d_{ij}} {\Delta t} = \frac{ d_{ij}^{(t+1)} - d_{ij}^{(t)}} {1} = -\alpha \psi'(|d_{ij}|) \: \frac{d_{ij}^{(t)}}{|d_{ij}^{(t)}|}
\end{equation}
consequently,
\begin{equation}
d_{ij}^{(t+1)} = d_{ij}^{(t)}  -\alpha \psi'(|d_{ij}|)  \: \frac{d_{ij}^{(t)}}{|d_{ij}^{(t)}|}
\end{equation}
If we want to nudge both agents' opinion by the same amount we must have:
\begin{equation}\label{eq:equalNudge}
\left\{
	\begin{array}{lr}
    o_i^{(t+1)} - o_i^{(t)} &= -\frac{\alpha}{2} \psi'(|d_{ij}^{(t)}|) \: \frac{d_{ij}^{(t)}}{|d_{ij}^{(t)}|} \\
	-(o_j^{(t+1)} - o_j^{(t)}) &= -\frac{\alpha}{2}  \psi'(|d_{ij}^{(t)}|) \: \frac{d_{ij}^{(t)}}{|d_{ij}^{(t)}|}
    \end{array}
\right.
\end{equation}
And therefore the update rule from time $t$ to $t+1$ is given by,
\begin{equation} \label{eq:updateNoWeights}
\left\{
  \begin{array}{lr}
    o_i^{(t+1)} &= o_i^{(t)}  -\frac{\alpha}{2} \: \psi'(|d_{ij}^{(t)}|) \: \frac{d_{ij}^{(t)}}{|d_{ij}^{(t)}|}	\vspace{.1in}\\
    o_j^{(t+1)} &= o_j^{(t)}  +\frac{\alpha}{2}  \psi'(|d_{ij}^{(t)}|) \: \frac{d_{ij}^{(t)}}{|d_{ij}^{(t)}|}
  \end{array}
\right.
\end{equation}
If the two agents have different influences on each other, then 
the opinions change by different amounts. We can dictate this 
fact by inserting the influence weights into Eq.\eqref{eq:updateNoWeights} as follows:
\begin{equation} \label{eq:WeightedUpdateRule}
\left\{
  \begin{array}{lr}
    o_i^{(t+1)}  &= o_i^{(t)}  - \frac{\alpha}{2} w_{ji} \:  \psi'(|d_{ij}^{(t)}|) \: \frac{d_{ij}^{(t)}}{|d_{ij}^{(t)}|}	\vspace{.1in}\\
    o_j^{(t+1)} &= o_j^{(t)}  +  \frac{\alpha}{2}  w_{ij} \: \psi'(|d_{ij}^{(t)}|) \: \frac{d_{ij}^{(t)}}{|d_{ij}^{(t)}|}
  \end{array}
\right.
\end{equation}
Note that $o_i^{(t)}(x)$ is function of both time and the topic.  So,
if there are finite number of topics in the system, then the total
energy between the two given individuals $i$ and $j$ would be the sum
of all energies of opinion interactions.  The tendency for
individualism could be added to the model as well at this point. We
will use the noise model introduced in \cite{Mas2010}.  By adding this
term to Eq.~\eqref{eq:WeightedUpdateRule}, we get:

\begin{equation}
\left\{
  \begin{array}{lr}
    o_i^{(t+1)}  &= o_i^{(t)}  - \frac{\alpha}{2} w_{ji} \:  \psi'(|d_{ij}^{(t)}|) \; \frac{d_{ij}^{(t)}}{|d_{ij}^{(t)}|} + \xi_i(t)	\vspace{.1in}\\
    o_j^{(t+1)} &= o_j^{(t)}  +  \frac{\alpha}{2}  w_{ij} \: \psi'(|d_{ij}^{(t)}|) \; \frac{d_{ij}^{(t)}}{|d_{ij}^{(t)}|}+\xi_j(t)
  \end{array}
\right.
\end{equation}
where $\xi_i(t)$ is a random (sample) value from a distribution $\xi_i(t) \sim N(0,\sigma_i(t))$ associated with individual $i$ at time $t$ with mean zero and variance defined by:

\begin{equation}
\label{eq:adaptive-noise}
\sigma_i(t) = \frac{s}{e-1} \Big ( - |N(i)| + \sum_{j \in N(i)} e^{1-|d_{ij}(t)|} \Big )
\end{equation}

where the parameter $s$ is used to manipulate the strength 
of disintegrating forces.  This tendency for uniqueness increases 
when the $d_{ij}$'s are small, i.e. there is high uniformity.

If $o_i^{(t+1)}$ becomes more than 1 or less than zero we 
should set it to 1 or zero, respectively.  This can be achieved 
by applying a clamping function:

\begin{equation*}
clamp(x) = 
\begin{cases} 0 & \text{if $x < 0$,} \\
x &\text{if $0 \leq x \leq 1$,} \\
1 &\text{if $x > 1$.}
\end{cases}
\end{equation*}

\subsubsection{ All neighbor interaction }

At this point we have derived a rule for updating opinions after a
single interaction.  Now we can use it for the model in which at each
time step, each agent updates their opinion according to all of their
neighbors' opinion, i.e. at each time step a agent would have
interaction with all of their neighbors, and their opinion in time $t+1$
depends on all of her neighbor's opinion in time $t$. Let the node $i$
have $k$ neighbors, then the updating rule is given by:

\begin{equation}\label{eq:synchronizedUpdate}
o_i^{(t+1)} = o_i^{(t)} + \sum_{j=1}^k -\frac{\alpha}{2} \: \psi'(|d_{ij}|)\; \frac{d_{ij}}{|d_{ij}|}
\end{equation}

The influence weights can be added accordingly. The influence 
weights that individual $i$ assigns to their neighbors and themselves 
(resistance force) has to add up to 1, so the weight matrix $w$ is
stochastic:

$$\sum_{j \in N(i)} w_{ji} = 1$$

This is the simplest case in which there is only one potential 
function for all interaction. We can also have a potential function
 for each pair or each agent could have her own potential function.

\subsubsection{Bidirectional vs unidirectional information exchange}

In the previous sections we assume that information flow 
is bi-directional in an interaction such that both individuals 
share their opinion state and both update it based on what 
they receive.  It is also possible to model interactions as 
uni-directional such that there is a well defined speaker 
that shares information to a receiving hearer but not vice 
versa.  This distinction can be encoded in the network
through the use of directed versus undirected edges.  
It must be noted that a single undirected edge between 
individuals $i$ and $j$ is not equivalent to two directed 
vertices in opposing directions ($i$ to $j$ and $j$ to $i$), 
as each edge corresponds to a single communication event.  

A similar effect can be accomplished with undirected graphs through
the use of zero weights.  An individual $i$ that influences others will
have weights $w_{i*}$ that are not necessarily zero, but if $i$ is not
influenced at all by others then $\forall j \neq i$, $w_{ji} = 0$.
The use of weights in this manner extends to graphless models in which
connectivity is based on spatial proximity where no notion of
directedness or edge connectivity exists.  We do not cover this mechanism
for individual connectivity in this paper.


\subsubsection{ Embedding other models into our model}
\label{sec:embedding}

To show BCM is special case of our model, let the potential function be
\begin{equation}
\psi(x) = \begin{cases} 
      x^2 & 0 \le x\leq \tau \\
      \tau^2 & \tau < x\leq 1 \\
   \end{cases}
\end{equation}
Then we have 
\begin{equation}
\psi'(x) = \begin{cases} 
      2x & 0 \le x\leq \tau \\
      0 & \tau < x\leq 1 \\
   \end{cases}
\end{equation}
Hence, using \eqref{eq:updateNoWeights} if the two agents are 
within confidence interval of each other we get:

\begin{equation} \label{eq:BCMspecial}
\left\{
  \begin{array}{lr}
    o_i^{(t+1)}  &= o_i^{(t)}  -\frac{\alpha}{2} \: (2|d_{ij}|) \: \frac{d_{ij}^{(t)}}{|d_{ij}^{(t)}|} =  o_i^{(t)} - \alpha \: d_{ij}^{(t)}	\vspace{.1in}\\
    o_j^{(t+1)} &= o_j^{(t)}  +\frac{\alpha}{2}  (2|d_{ij}|) \: \frac{d_{ij}^{(t)}}{|d_{ij}^{(t)}|} =  o_j^{(t)}  + \alpha \: d_{ij}^{(t)}
  \end{array}
\right.
\end{equation}
which is identical to the equation in section 2.1 of~\cite{Deffuant2000}. 
This is the pairwise interaction and it is easy to build the synchronized version.\\

We saw in the DeGroot model we have $\mathbf{y}^{(t+1)} = \mathbf{P} \mathbf{y^{(t)}}$. 
The following conditions will show the DeGroot model is a special 
case of our model. Consider synchronized version of the game with 
the following conditions. Define potentials 

\begin{equation}
\psi_{ij} = \psi(o_i,o_j) = \begin{cases} 
                       - p_{ij} o_i o_j &  j \ne i \\
      -\frac{1}{2} (p_{ii}-1) o_i^2 &  j = i \\
   \end{cases}
\end{equation}
then,
\begin{equation}
\frac{\partial}{o_i} \psi_{ij}  = \begin{cases} 
       -p_{ij} o_j      &  j \ne i \\
       -(p_{ii}-1) o_i &  j = i \\
   \end{cases}
\end{equation}

Now, the DeGroot model says $\mathbf{y}^{(t+1)} = \mathbf{P} \mathbf{y}^{(t)}$  
which is the same as \linebreak $\mathbf{y}^{(t+1)} - \mathbf{y}^{(t)} = (\mathbf{P}-\mathbf{I}_N) \mathbf{y}^{(t)}$ 
where $\mathbf{I}_N$ is identity matrix of size $N$. Since we are letting 
self interaction, defined by $\psi_{ii}$ for individual $i$ we have the 
following, whose right hand side will still give us a gradient descent 
method: (Let $\alpha = 2$)

\[
\begin{array}{lllllll}
o_i^{(t+1)} - o_i^{(t)}  &=& - \psi'_{i1}(o_i^{(t)},o_1^{(t)}) -  \psi'_{i2}(o_i^{(t)},o_2^{(t)}) - \cdots - \psi'_{ii}(o_i^{(t)},o_i^{(t)}) - \cdots  - \psi'_{iN}(o_i^{(t)},o_N^{(t)}) \\
             &=& p_{i1}o_1^{(t)} + p_{i2}o_2^{(t)} + \cdots + ( p_{ii} - 1 ) o_i^{(t)} + \cdots  p_{iN}o_N^{(t)} 
\end{array} 
\]
and therefore, 

\[
\begin{array}{lllllll}
o_i^{(t+1)}  &=& - \psi'_{i1}(o_i^{(t)},o_1^{(t)}) -  \psi'_{i2}(o_i^{(t)},o_2^{(t)}) - \cdots - \psi'_{ii}(o_i^{(t)},o_i^{(t)}) - \cdots  - \psi'_{iN}(o_i^{(t)},o_N^{(t)}) \\
             &=& p_{i1}o_1^{(t)} + p_{i2}o_2^{(t)} + \cdots + p_{ii} o_i^{(t)} + \cdots  p_{iN}o_N^{(t)} 
\end{array} 
\]
which is the $i^{th}$ row of the DeGroot model matrix.\\

The interesting dynamics of French model~\cite{FRENCH1956} which 
works with digraphs comes from the topology of the network . 
Let $i$ be a node with indegree $k$ and define $N(i) = \{i_1, i_2, \cdots , i_k\}$ 
be the neighbors of agent $i$ who have power over it. Define the potential functions

\begin{equation}
\psi_{ij} = \psi(o_i,o_{i_j}) = \begin{cases} 
      - \frac{o_i o_{i_j}}{k+1} &  1 \le i_j \le k \\
      \frac{k}{2(k+1)} o_i^2 &  i_j = i \\
   \end{cases}
\end{equation}
where $o_{i_j}^{(t)}$ is opinion of $j^{th}$ neighbor of $i$ 
which has power over $i$, i.e. there is an incoming edge 
from $i_j$ agent to $i$, and the case $i_j = i $ is the 
self-resistance element of French model, then we have
\begin{equation}
 \frac{\partial}{\partial o_i} \psi_{ij} = \begin{cases} 
      - \frac{ o_{i_j} } { k+1 } &  1 \le i_j \le k \\
      \frac{k}{(k+1)} o_i &  i_j = i \\
   \end{cases}
\end{equation}
and therefore, moving in the opposite of gradient of the potential function with synchronous update gives us the following step size
\[ o_i^{(t+1)} - o_i^{(t)}= \frac{1}{k+1} ( o_{i_1}^{(t)} + o_{i_2}^{(t)} + \cdots + o_{i_k}^{(t)} ) - \frac{k}{k+1} o_i^{(t)}\]
which yields the French update rule:
\[ o_i^{(t+1)} = \frac{1}{k+1} (  o_i^{(t)} +  o_{i_1}^{(t)} + o_{i_2}^{(t)} + \cdots + o_{i_k}^{(t)} )\]

\subsection{Topic coupling}
\label{Topic-coupling}
In the real world, topics are not independent of each other. For example, 
there might be good reasons that if someone changes their opinion 
about education then they would also change their mind about social
topics in which education is a factor.

\theoremstyle{definition}
\begin{definition}{}
Change of opinion about topic $s_k$ as a result of change of opinion about topic $s_\ell$ is called \emph{coupling}.
\end{definition}

The coupling is done in two ways. The first case is the discrete case
where there are a finite number of topics in the system. In a game the
two agents can talk about some of the topics and then the coupling is
performed after the interaction is complete.  In the second model
there are uncountable number of topics, at a given time step two given
individuals could talk about all topics at the same time to the extent
they please, i.e. they have the option of not talking about some of
the topics or revealing all information about a given topic in a game,
and that is captured by the so called \textit{conversation filter}
function. This generalized model consist of three steps.  In this
first phase, each opinion is changed only with the force enforced by
negotiation about the given topic and in the second phase the coupling
is done between topics within each individual prior to any other
interactions. In the coupling phase some opinions might be pushed out
of the opinion space they belong to, hence, in the third phase we
would push the out of place opinions back in place.
\subsubsection{Discrete coupling}

\theoremstyle{definition}
\begin{definition}{}
Given a topic space $\mathbb{T} = \{ s_1, s_2, \cdots, s_n \}$ and let
$\mathbf{C}$ be the \emph{coupling matrix} for the system. For two
given topics $s_{\ell}$ and $s_k$, we define the \textit{inverse
  coupling strength}(or \textit{inverse coupling coefficient} or
\textit{inverse correlation strength}), denoted by $c_{\ell k}$,
to be the parameter that determines how much the two topics are
influenced by each other. The coefficient $c_{\ell k}$ dictates how the
opinion $o(s_k)$ will change in response to a change in opinion
$o(s_{\ell})$ that occurs due to an interaction.
$c_{\ell k}$ does not equal $c_{k\ell}$ necessarily.
\end{definition}

A positive coupling coefficient would mean the two topics have direct
correlation and a negative coefficient means the two topics have an
inverse correlation. Topics are by definition completely correlated
with themselves, so $c_{s_is_i}=1$ for all topics $s_i$.

To derive the update rule for coupled topics, suppose that individual
$i$ has their opinion changed about a topic after an interaction. This
change is due to an external force (negotiation force) resulting from
the opinion held by the other individual on the subject discussed
during the interaction. Let $f:= 1 \cdot \Delta o_i(s_{\ell}) $ and
let topic $s_{\ell}$ to be related to topic $s_k$ with coefficient
$c_{\ell k}$. Then the same force would be applied to opinion of the
agent about subject $s_k$ and therefore:

\[f = 1 \cdot \Delta o_i (s_{\ell}) = c_{\ell k} \; \Delta o_i (s_{k})  \]
which gives us the update rule for the opinions that are correlated to the topic $s_{\ell}$: 

\[ o_i^{(t+1)}(s_k) =  o_i^{(t)}(s_k) + \frac{1}{c_{\ell k}} \; \Delta o_i^{(t)}(s_\ell). \]

\subsubsection{Continuous Kernel-based coupling}

In the previous section we modeled a discrete version of coupling where 
there are finite number of topics in the system and each pair of topics are 
coupled with their specific coupling coefficient. Here we would model the generalized 
continuous topic space where the coupling coefficients are determined 
via a two variable function and each pair of agents have the option of 
revealing all of their information about all topics in a single game. 
So, in a single time step, the two agents have the choice of talking 
about as many topics as they please, to whatever extent they prefer. 
For example, two agents can talk about their favorite food and favorite 
chef in an interaction, but not their favorite president. 
This is implemented via a conversation \textit{filter} $f_{ij}(x,t) \in [0,1]$ 
which can evolve over time. $f_{ij}(x,t)=0$ means they would not 
talk about topic $x$, $f_{ij}(x,t)=1$ would mean they reveal all of 
their information about the given topic. It will not effect the opinions that are revealed, but it  determines the weight that each revealed opinion contributes to the interaction energy.\\

Let $\mathbb{T} := [0,L]$ be the topic space and $\mathcal{S}$ be the set 
of measurable functions \linebreak $f : [0,L] \rightarrow [0,1]$. 
Suppose further that $\psi: \mathbb{R} \rightarrow \mathbb{R}$ 
is bounded. Then $o_i \in \mathcal{S}$, $\forall i \in V$. 
We define the interaction energy between nodes $i$ and $j$ over topic space by:

\[ Q_{ij} = \int_0^L \psi(||o_i^{(t)}(x) - o_j^{(t)}(x) ||) \; f_{ij}^{(t)}(x)dx \]
Minimizing this energy would lead us to an update rule for the agents at topic $x$. 
 
After the conversation, each individual has updated her opinion 
about any given topic via interaction and then it is time to do the 
coupling in the topic network. In order to do this, we measure the 
energy between two topics for node $i$ as follows 
(we can have different potential energy $\phi$ for every pair of 
topics $x$ and $y$, hence we use the notation $\phi(||o_i(x) - o_i(y)||,x,y)$):

\[ U_i[x,y] = \int_0^L  \int_0^L \phi(||o_i(x) - o_i(y) ||, x, y) \; k(x,y) dx dy \]

Where the kernel $k(x,y)$ determines the connection strength between topics $x$ and $y$.\\
To obtain the update rule for the two individuals we take the variational derivative of functional $U$.\\

Let  \[  U_i[x,y] = \int_0^L  \int_0^L \phi(||o_i(x) - o_i(y) ||, x, y) \; k(x,y) dx dy \]

In ordinary calculus $\Delta f$ is computed (as a function of $x$) when $x$ changes 
by $\Delta x$, here, $\delta U$ is  computed when $o_i$ changes by $\delta o_i$.
{\small
\begin{eqnarray*}
\frac{d}{d \epsilon} U_i[o(x)\hspace{-1.75em}&& + \epsilon h(x) , o(y) ]\Big|_{\epsilon=0} \\ &&=  \left. \bigg[\int_0^L  \int_0^L \partial_1 \phi(||o(x) +\epsilon h(x) - o(y) ||, x, y) \Big( \frac{o(x)+\epsilon h(x) - o(y)}{ || o(x)+\epsilon h(x) - o(y)|| }\cdot \: h(x)\Big) k(x,y) dx dy\bigg] \right|_{\epsilon=0}\\ 
&&=\int_0^L  \int_0^L \frac{\partial}{\partial o_i}\phi(|| o(x) - o(y) ||, x, y) \; \Big( \frac{o(x) - o(y)}{ || o(x) - o(y) || } \cdot h(x) \Big)  \; k(x,y) dx dy\\ 
&&=\int_0^L \bigg( \int_0^L \partial_1\phi(||o(x) - o(y)||, x, y) \frac{o(x) - o(y)}{ || o(x) - o(y) || } \; k(x,y) dy \bigg)\cdot h(x) dx\\
\end{eqnarray*}
}
Therefore, the derivative which gives the updating rule would be:

\[ \frac{\delta U}{\delta o_i(x)}  =  \int_0^L \partial_1 \phi(||o_i(x) - o_i(y) ||, x, y) \:  \frac{o_i(x) - o_i(y)}{ || o_i(x) - o_i(y) || } \; k(x,y) dy \]
which yields to:
\begin{equation} \label{eq:Coupiling-Topics}
o_i^{(t+1)}(x) = o_i^{(t)}(x) - \alpha \int_0^L \partial_1 \phi(||o_i^{(t)}(x) - o_i^{(t)}(y) ||, x, y) \Big( \frac{o_i^{(t)}(x) - o_i^{(t)}(y) }{ || o_i^{(t)}(x) - o_i^{(t)}(y) || }  \Big) \; k(x,y) \: dy
\end{equation}

For the reason explained below, we break the time step into two 
parts, $(t,\: t+1/2)$ and \linebreak $(t+1/2, \: t+1)$. Therefore, the 
left side of Eq.~\eqref{eq:Coupiling-Topics} is modified to 
$o_i^{(\tilde{t})}(x)$, where $\tilde{t} = t + 1/2$.

\section{Dynamics of the opinion game}\label{sec:dynamics}

The family of opinion game models we have introduced is rich and
flexible.  While the special case of symmetric interactions yields
gradient flows, the full range of this model is much broader. But even
in the gradient flow case, dynamics need not be simple: a sufficiently
complex potential can lead to very complex dynamical behavior, at
least computationally.

In this section we begin the exploration of model dynamics. Much of
the section focuses on the behavior in the continuous, synchronous
limit for systems in which the interaction potentials are symmetric
and identical. But even in these simpler cases we find behavior that
is not completely trivial.

We first study the behavior of a two agent system, each agent having a
different interaction potential.  This is simple enough that the
behavior can be understood in detail.

Moving to three agent systems in which all potentials are the same
symmetric tent function, we find that in addition to consensus
clusters, there are entire neighborhoods of fixed points. For a
deterministic interaction order, the discrete, asynchronous game can
see subsets of these regions as fixed regions in which the system
cycles endlessly. Even when the interaction order is random, these
regions of fixed points can lead to slow drifting random walks in
opinion space. The results from these studies are immediately relevant
for systems of $N = 3k$ agents for any $k$.

We then turn to potentials that are no longer tent potentials to find
non-trivial stable fixed points. These bell shaped potentials show us
that even in the case of gradient systems (which is what we have when
all potentials are the same) there can be interesting equilibrium
states.

While our explorations in this section are admittedly just the
beginning of a more comprehensive study of the model family, what we
show here already suggests some richness in behavior that encourages
us to push ahead with the other, more complex members of this family
of models.

\subsection{Two Agent Systems}
\label{TwoAgentsSystem}

In this section, we consider the dynamics of a two agent system:

\begin{equation}
\label{eq:DifferenceVectorField}
	\begin{array}{llr}
  \dot{o}_1 &=& f_1(o_1,o_2) = -\frac{\partial\psi_1(|o_1-o_2|)}{\partial o_1}\\
  \dot{o}_2 &=& f_2(o_1,o_2) = -\frac{\partial\psi_2(|o_1-o_2|)}{\partial o_2}
    \end{array}
\end{equation}

The situation is simple enough that we can understand everything in
detail. Two observations enable us to fairly easily unravel the
dynamics of any two agent system:
\begin{enumerate}
\item the vector field in the 2-dimensional opinion state space,
  $\{(o_1,o_2) \;|\; o_1,o_2\in [0,1] )\}$, is constant along $o_1 -
  o_2 = c$ lines.
\item the dynamics are completely determined by the positions of the
  maxima and minima of the potential functions.
\end{enumerate}
To illustrate this, we look at a couple of systems in which each
individual's interaction potential is a simple tent function, though
with different peak positions. We are able to give a complete picture
of the simple dynamics in each case. 

What about the case when the potentials are not functions only of the
difference $|o_1-o_2|$? I.e. when
\begin{equation}
\label{eq:VectorField}
	\begin{array}{llr}
  \dot{o}_1 &=& f_1(o_1,o_2) =  -\frac{\partial\psi_1(o_1,o_2)}{\partial o_1}\\
  \dot{o}_2 &=& f_2(o_1,o_2) =  -\frac{\partial\psi_2(o_1,o_2)}{\partial o_2}
    \end{array}
\end{equation}
While this question is mostly left to future papers, we close the
section with a simple sensitivity result for the case of general
potentials.

\subsubsection{Two tents}
\label{sec:twotents}

In Fig.~(\ref{fig:EqualTipsSubFigure}) we see the case of two tent
potentials with their tips at $\tau_1$ and $\tau_2$ where $\tau_1 \leq
\tau_2$, and $1-\tau_2 = \tau_1$. In this simple case, the 2
dimensional state space is divided into regions in which the
vector field generated by the partial derivatives of the potentials are
constant.  There are three cases:
\begin{enumerate}
\item When $ o_1 - o_2 < \tau_1$, the agent's opinions move
towards each other and the slope of vectors is $\frac{\Delta
  o_2}{\Delta o_1} = - \frac{\tau_1}{\tau_2}$.
\item If $ \tau_1 < o_1 -
o_2 < \tau_2$, then, agent 1 moves away with agent two in pursuit,
$\frac{\Delta o_2}{\Delta o_1} = \frac{1 - \tau_1}{\tau_2} =
1$.
\item Finally, if $ o_1 - o_2 > \tau_2$ they repulse each other
towards polarization.
\end{enumerate}
We denote the region in which the opinions are
attracted towards each other by A, the region where one opinion chases
the other we denote by C, and R is used to label the region where they
repulse each other.

\begin{figure}[httb!]
\hspace{.6in}
\begin{subfigure}{.45\textwidth}
  \centering
  \includegraphics[width=1\linewidth]{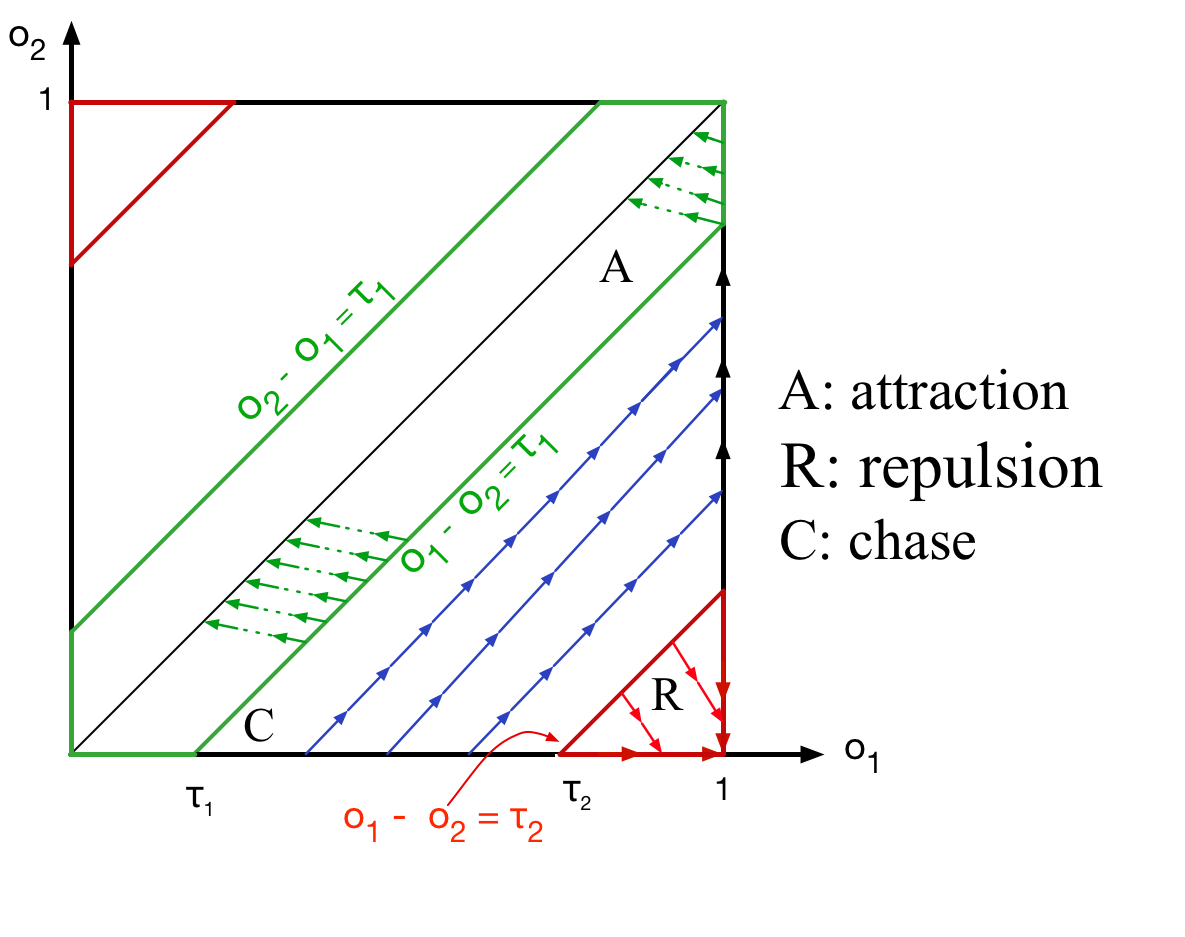}
  \caption{ $1 - \tau_2 = \tau_1$}
  \label{fig:EqualTipsSubFigure}
\end{subfigure}%
\hspace{.2in}
\begin{subfigure}{.45\textwidth}
  \centering
  \includegraphics[width=1\linewidth]{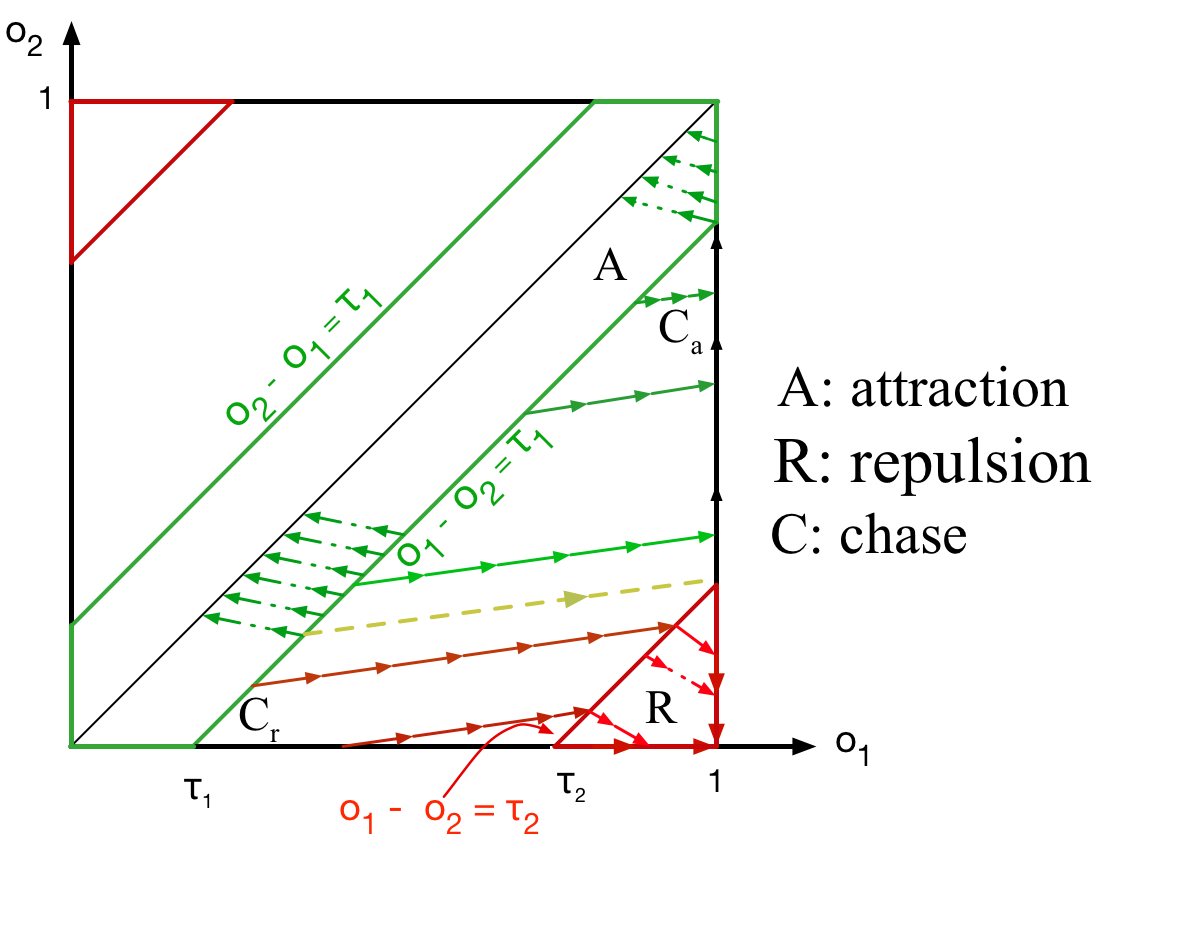}
  \caption{$1 - \tau_2 < \tau_1$}
  \label{fig:ChaseAreaSubFigure}
\end{subfigure}%
\caption{Two agents vector fields}
\label{fig:TwoAgentsVectorFields}
\end{figure}

In the Fig.~(\ref{fig:ChaseAreaSubFigure}) we show the case of two tent
potentials with their tips at $\tau_1$ and $\tau_2$ where $\tau_1 \leq
\tau_2$ and $1-\tau_2<\tau_1$. Again we use A, C, and R for the
attraction, chase and repulsion regions, but now the chase region is
divided into 2 sub-regions, $\text{C}_a$ and $\text{C}_r$, depending on
whether the opinion state eventually enters the attraction region or
the repulsion region.

\begin{enumerate}
\item In region A, the slope of vectors is $m = \Delta o_2 / \Delta
  o_1 = -\tau_1 / \tau_2 < -1$, directed towards consensus.  All
  points on the line segment that connects $p = (a,a)$ to $q=
  (a+\cfrac{\tau_1\tau_2}{\tau_1+\tau_2}, \: a -
  \cfrac{\tau_1^2}{\tau_1+\tau_2})$ flow towards to $p = (a, \: a)$.
\item We subdivide region C into $\text{C}_a$ and $\text{C}_r$ regions:
  \begin{enumerate}
  \item The line $o_2 = \frac{1-\tau_1}{\tau_1}o_1 + \tau_2$ divides C
    into two regions.
  \item Above that line the chase eventually ends with in region A.
  \item Below that line the chase eventually ends in region R.
  \end{enumerate}
\item In the region R $o_1 - o_2 > \tau_2$, the opinions repulse and
  $o_1$ converges to $1$ and $o_2$ to $0$.
\end{enumerate}

\begin{remark}
  We are ignoring the fact that the tent system has undefined gradient
  at the tent tips, though it is not hard to deduce the results if
  smoothed out versions are used instead.
\end{remark}

\subsubsection{Two Smooth Potentials}
\label{sec:twolip}

When $\psi_1 = \psi_2$ or when $\psi_1$ and $\psi_2$ are piecewise
linear, the trajectories in the opinion state space are piecewise
linear and the sensitivity (separation of orbits) is very simple to
understand: mostly orbits that start close, remain close to each
other. When there are two different potentials with non-constant
derivatives, e.g. two different Gaussian-like potentials, the vector
field becomes slightly more interesting, though it is still constant
on $o_1-o_2 = c$ lines. This again results in a system whose
sensitivity to initial conditions are quite straightforward to
understand. (The simple structure of the lines of unstable fixed
points gives us the structure of the regions in which small
perturbations result in large differences in final equilibrium
positions.)  When we relax the assumption that the potentials are
functions of the difference $o_1 - o_2$ alone, we can get more
interesting behavior. In this section, we close with an example of the
kind of result we can get when, instead of assuming $\psi_1$ and
$\psi_2$ are functions of $|o_1-o_2|$ alone, we assume that $\psi_1$
and $\psi_2$ are smooth.


\begin{prop}\label{driftOffProp}
  Let $G$ be a network with two agents with smooth, compactly
  supported potential functions, $\psi_1$ and $\psi_2$, generating a
  vector field, as in Eq.~(\ref{eq:VectorField}), with Lipschitz
  constant $L$. Let $p = (o_1^{(0)}, \: o_2^{(0)} )$ be the initial
  opinion state of the system, and $\hat{p} = (\hat{o}_1^{(0)}, \:
  \hat{o}_2^{(0)} ) $ be a perturbation of $p$. Then a bound on the
  difference between the states at time $T$ is given by \[||p^T -
  \hat{p}^T|| \leq || p^{(0)} - \hat{p} ^{(0)} || e^{TL^*}\] where
  $L^* \sim O(L)$.
\begin{proof}
  Let $\psi_1$ and $\psi_2$ be smooth, compactly supported potential
  functions of nodes 1 and 2 respectively. The Lipschitz constant of
  the vector field $(f_1,f_2)$ in Eq.~(\ref{eq:VectorField}) is
  $L=\sqrt{L_1^2+L_2^2}$ where $L_i$ is Lipschitz constant of each
  $f_i$. (The partial derivatives of $\psi_1$ and $\psi_2$ are
  Lipschitz since they are continuously differentiable on a compact
  set.) Define $p^{(0)} = [o_1^{(0)} \quad o_2^{(0)}]$ and $\hat{p}^{(0)} =
  [\hat o_1^{(0)} \quad \hat o_2^{(0)}]$ be two initial conditions to $n$
  steps of Runge-Kutta method $(A,b^T,c)$, using step size $h \le
  h_0$, where $h_0L\rho(|A|) < 1$, and let $p_n$ and $\hat{p}_n$ be the
  corresponding output values, then by \textbf{Lemma 319A} of
  \cite{Butcher} we have:
\begin{equation}
\label{RKbound}
|| p_n - \hat{p}_n || \le (1+hL^*)^n || p_0 - \hat{p}_0 ||
\end{equation}
where $L^* = L|b^T|(I-h_0L|A|)^{-1} \mathbf{1} $ . Letting $h =
\frac{T}{n}$ we have
\begin{equation}
  \dfrac{|| p_n - \hat{p}_n ||}{|| p_0 - \hat{p}_0 || } \le e^{TL^*}
\end{equation}
Since, as $n\rightarrow\infty$, $p_n \rightarrow p^T$ and $\hat p_n
\rightarrow \hat{p}^T$, this last inequality gives us the result.
\end{proof}
\end{prop}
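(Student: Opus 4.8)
The plan is to read this as a continuous-dependence-on-initial-conditions estimate with an explicit exponential rate, i.e.\ a Grönwall-type bound for the flow of the vector field in Eq.~(\ref{eq:VectorField}). Two routes are available: a direct differential-inequality argument on the ODE itself, or a discretize-then-pass-to-the-limit argument built on a one-step numerical stability lemma --- the latter being the route signalled by the appearance of a Runge--Kutta tableau $(A,b^T,c)$ in the statement's neighborhood. In either case the first thing to establish is a global Lipschitz constant for the right-hand side.

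Step one is Lipschitz continuity of the field. Because $\psi_1$ and $\psi_2$ are smooth with compact support, each partial derivative $\partial_i\psi_i$ is smooth and compactly supported, hence $C^1$ with bounded derivative (bounded on the support, identically zero off it), and therefore globally Lipschitz with some constant $L_i$. Then $F=(f_1,f_2)=(-\partial_1\psi_1,-\partial_2\psi_2)$ satisfies $\|F(u)-F(v)\|^2 = |f_1(u)-f_1(v)|^2 + |f_2(u)-f_2(v)|^2 \le (L_1^2+L_2^2)\|u-v\|^2$, so $F$ is Lipschitz with $L=\sqrt{L_1^2+L_2^2}$.

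Step two converts this into the exponential separation bound. For the direct route, put $e(t)=p^t-\hat{p}^t$; then $\dot e = F(p^t)-F(\hat{p}^t)$, so $\frac{d}{dt}\|e\|^2 = 2\langle e,\,F(p^t)-F(\hat{p}^t)\rangle \le 2L\|e\|^2$ by Cauchy--Schwarz and the Lipschitz bound, and Grönwall gives $\|e(T)\| \le \|e(0)\|e^{LT}$, i.e.\ the claim with $L^*=L$. For the discretization route, fix a Runge--Kutta method $(A,b^T,c)$ and a step $h\le h_0$ with $h_0L\rho(|A|)<1$; Lemma 319A of \cite{Butcher} yields the one-step estimate $\|p_{k+1}-\hat{p}_{k+1}\|\le (1+hL^*)\|p_k-\hat{p}_k\|$ with $L^* = L|b^T|(I-h_0L|A|)^{-1}\mathbf{1}$, a quantity that is $O(L)$ once the tableau and $h_0$ are fixed. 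Iterating over $n$ steps gives $\|p_n-\hat{p}_n\|\le (1+hL^*)^n\|p_0-\hat{p}_0\|$; choosing $h=T/n$, using $(1+TL^*/n)^n \le e^{TL^*}$ for each $n$, and letting $n\to\infty$ with $p_n\to p^T$ and $\hat{p}_n\to\hat{p}^T$ (convergence of the Runge--Kutta iterates to the exact flow of the Lipschitz field) delivers $\|p^T-\hat{p}^T\|\le \|p^{(0)}-\hat{p}^{(0)}\|e^{TL^*}$.

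I do not expect a single hard obstacle here --- the content is essentially classical --- but two points deserve care. First, one must confirm that compact support genuinely upgrades a pointwise derivative bound to a \emph{global} Lipschitz constant; this works precisely because $F$ vanishes identically outside a fixed compact set, so difference quotients everywhere are controlled by the bound on the support. Second, in the discretization route one must justify the interchange of limits: the discrete bound survives as $n\to\infty$ only because the Runge--Kutta iterates converge to the true trajectories and $(1+TL^*/n)^n\to e^{TL^*}$. It is also worth stressing, as the preceding remark hints, that smoothness is not cosmetic: for the tent potentials of Sec.~\ref{sec:twotents} the field is only piecewise Lipschitz and its loci of non-differentiability are exactly where sensitivity can jump, so the hypothesis rules out precisely the bad case.
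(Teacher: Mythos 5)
Your discretization route is exactly the paper's proof: the same Lipschitz constant $L=\sqrt{L_1^2+L_2^2}$ from smoothness and compact support, the same appeal to Lemma 319A of Butcher for the one-step Runge--Kutta stability bound, the same choice $h=T/n$ and passage to the limit $n\to\infty$. Your alternative direct argument --- setting $e(t)=p^t-\hat p^t$ and applying Cauchy--Schwarz plus Gr\"onwall to $\frac{d}{dt}\|e\|^2$ --- is not what the paper does, but it is cleaner: it bypasses the numerical-analysis machinery entirely, gives the sharper constant $L^*=L$ rather than the tableau-dependent $L|b^T|(I-h_0L|A|)^{-1}\mathbf{1}$, and avoids having to justify convergence of the Runge--Kutta iterates to the exact flow. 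Either version is a complete and correct proof.
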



\begin{exmp} We begin by defining $\gamma = o_1 - o_2$ and the Gaussian potential $\psi_{\mu,\sigma}(\gamma) = \frac{1}{\sigma
    \sqrt{2\pi}}e^{-\frac{(\gamma - \mu)^2}{2\sigma^2}}$. Then
 \begin{equation}\psi_{\mu,\sigma}''(\gamma) = \frac{1}{\sigma^3 \sqrt{2\pi}} e^{-\frac{(\gamma - \mu)^2}{2\sigma^2}} [\frac{(\gamma-\mu)^2}{\sigma^2} - 1]\end{equation}
 We now choose $\psi_1 = \psi_{0.7,1}$ and $\psi_2 = \psi_{0.3,2}$. We
 compute $L_1 = \sup_{|\gamma|\leq 1}|\psi''_1(\gamma)| =
 |\frac{-.3992}{ \sqrt{2\pi}}|$ at $\gamma = 0$ and $L_2 =
 \sup_{|\gamma|\leq 1}|\psi''_2(\gamma)|=|\frac{-.8254}{2^3
   \sqrt{2\pi}}|$ at $\gamma = 1$ and consequently $L = 0.164529$ and
 hence, $L^* = 0.16$.  As an example let $O^{(0)} = [0.2 \quad .1]$ be
 an initial state and two perturbations of it, $O_t^{(0)} = O^{(0)} +
 10^{-8} \bm{v}$ and $O_p^{(0)} = O^{(0)} + 10^{-8} \bm{w}$, where
 $O_t^{(0)}$ is perturbation in direction of tangent line to vector
 field at the point $[0.2 \quad 0.1]$ and $O_p^{(0)}$ is perturbation
 in the direction perpendicular to the tangent line. $\bm{v}$ and
 $\bm{w}$ are unit vectors. Applying RK41 with step size of $h = 0.1$
 and final time $T = 30$ we take $n = T/h = 300$ steps to get to the
 points $O^{(300)}$, $O_t^{(300)}$ and $O_p^{(300)}$.  Let $\bm{v}_1 =
 O_t^{(0)} - O^{(0)}$, $\bm{\hat{v}}_1 = O_t^{(300)} - O^{(300)}$,
 $\bm{w}_1 = O_p^{(0)} - O^{(0)}$ and $\bm{\hat{w}}_1 = O_p^{(300)} -
 O^{(300)}$, then we have \[\cfrac{||\bm{v}_1||}{||\bm{\hat{v}}_1||} =
 0.0363, \quad \cfrac{||\bm{w}_1||}{||\bm{\hat{w}}_1||} = 1.145 \]
 while by Eq.~(\ref{RKbound}) we have
 \[\dfrac{|| O_n - \hat O_n ||}{ || O_0 - \hat O_0 ||} \le (1 +
 0.0165)^{300} = 135.58,\] showing that in this case, the bounds,
 though correct, are very pessimistic.
\end{exmp} 
\begin{remark}
  Note that in the previous example, even though the potential
  functions are no compactly supported in the unit square, they still
  have the property that all derivatives are bounded, which is all the
  we actually used in the previous proposition.
\end{remark}
\subsection{Dynamics of three agents}
\subsubsection{Regions of Fixed Points for Tent Potentials}
In this section we let $G = (V,E)$ be a a fully connected network with
$|E| = 3$ and study the continuous dynamical system generated by those
three interacting agents. All the interaction potentials are assumed
to be identical. The total energy
of such a gradient system is given by:
\begin{equation}\label{eq:totalEnergyDef}
\Psi_{123} = \psi_{12}(|o_1 - o_2|) + \psi_{23}(|o_2 - o_3|) + \psi_{31}(|o_3 - o_1|)
\end{equation}
where $\psi_{ij}$ is the potential assigned to the edge
between nodes $i$ and $j$.  Each interaction gives us
\begin{equation}\label{eq:equalNudgeCont}
\text{$i$-$j$ interaction contribution to gradient vector field }\left\{
	\begin{array}{lr}
    \dot{o}_i(i,j) &= - \psi_{ij}'(|d_{ij}^{(t)}|) \: \frac{d_{ij}^{(t)}}{|d_{ij}^{(t)}|} \\
    \dot{o}_j(i,j) &= - \psi_{ij}'(|d_{ij}^{(t)}|) \: \frac{-d_{ij}^{(t)}}{|d_{ij}^{(t)}|} \\
    \end{array}
\right.
\end{equation}
where $d_{ij} = o_i-o_j$.
Along the diagonal lines in the coordinate subspace corresponding to
individuals $i$ and $j$, the difference
$o_i - o_j$ is constant and since both agents' behavior is enforced by
the same potential, $\psi_{ij}$, the direction of movement is either
$(1,\:-1)$ or $(-1,\:1)$. See Figure~(\ref{fig:TwoagentSpace}).

\begin{figure}[httb!]
\centering
\includegraphics[width=.6\linewidth]{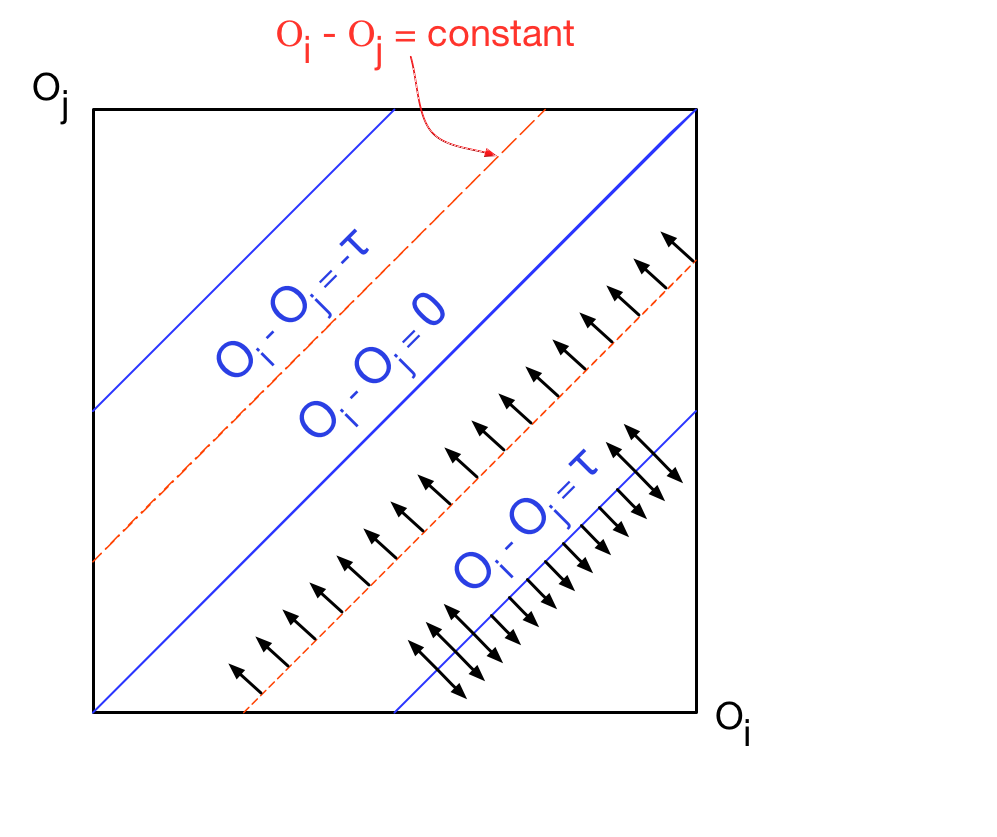}
\caption{Coordinate subspace corresponding to
individuals $i$ and $j$.}
\label{fig:TwoagentSpace}
\end{figure}

By Eq.~(\ref{eq:equalNudgeCont}) we have:
\begin{equation*}
\nabla \psi_{12}(|o_2 - o_1|) = (a,-a,0) , \hspace{.1in} \nabla \psi_{13}(|o_3 - o_1|) = (-b, 0, b) , \hspace{.1in} \nabla \psi_{23}(|o_3 - o_2|) = (0 , c , -c ) 
\end{equation*}
where $a$, $b$ and $c$ can be either positive or 
negative and are given by the right side of Eq.~(\ref{eq:equalNudgeCont}). 
Combining these, we get that the gradient vector field generated by our potential energy is given by:
\begin{equation} \label{eq:totalGrad}
\begin{array}{lllllll}
\dot{\vv{\mathbf{o}}} &=& \nabla \Psi_{123} \\
             &=& \nabla \psi_{12}(|o_1 - o_2|) + \nabla  \psi_{13}(|o_1 - o_3|) + \nabla   \psi_{32}(|o_3 - o_2|)\\
             &=& ( a ,\: -a,\: 0) + ( 0 ,\: b,\: -b) + ( -c ,\: 0,\: c) \\
             &=& ( a - c, \: b - a, \: c - b).
\end{array} 
\end{equation}
Equation~(\ref{eq:totalGrad}) shows we will get fixed points if $a = b = c$. 
\bigskip

Denoting the opinion state space by $\mathcal{C}_3 = \Bbb{O}^3 = [0,1]^3$,
we define \[\mathcal{D}_3 = \{(o_1, o_2, o_3)| o_i = c \in [0,1], 1 \le i
\le 3 \}\] to be the diagonal of $\mathcal{C}_3$ corresponding to
consensus states, denote the set of extreme points where the system is
polarized by \[\mathcal{E}_3 = \{ (1,0,0), (0,1,0), (0,0,1), (1,1,0),
(1,0,1), (0,1,1) \}.\]

The next theorem shows that when the common potential is a tent
function, there are regions of neutral fixed points in the system that
are neither consensus points nor polarization points.

\begin{theorem}\label{fixedPointRegion}
  Let $G_3 = (V, E)$ be a fully connected network of three nodes. Let
  all interaction potentials be the same tent potential function, with
  its tip at $\tau = 0.5$ and maximum height $h$. Then there are
  subsets of $\mathcal{C}_3\setminus (\mathcal{D}_3 \cup \mathcal{E}_3)$
  with nonempty interior, which are made up of neutral fixed points.
\end{theorem}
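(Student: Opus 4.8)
The plan is to produce an explicit open subset of $\mathcal{C}_3\setminus(\mathcal{D}_3\cup\mathcal{E}_3)$ on which the gradient vector field of Eq.~(\ref{eq:totalGrad}) vanishes identically. First I would fix the tent potential: with tip $\tau=1/2$ and height $h>0$, write $\psi(x)=2hx$ on $[0,1/2]$ and $\psi(x)=2h(1-x)$ on $[1/2,1]$, so $\psi'(x)=2h$ for $x<1/2$ and $\psi'(x)=-2h$ for $x>1/2$. Thus each pairwise contribution to the field has fixed magnitude $2h$ away from the kink; only its sign varies, and it flips both when a pair crosses distance $1/2$ and according to the order of the two opinions involved. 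By Eqs.~(\ref{eq:equalNudgeCont}) and~(\ref{eq:totalGrad}), at a point with (say) $o_1<o_2<o_3$ the field is $(a-c,\ b-a,\ c-b)$ with $a,b,c\in\{\pm 2h\}$, and by that same equation a fixed point occurs exactly when $a=b=c$.

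Second, I would isolate the region
\[
\mathcal{R}=\{(o_1,o_2,o_3)\in[0,1]^3:\ o_1<o_2<o_3,\ \ o_2-o_1<\tfrac12,\ \ o_3-o_2<\tfrac12,\ \ o_3-o_1>\tfrac12\}
\]
together with its five images under permuting the coordinates. Geometrically, in $\mathcal{R}$ the two adjacent pairs $(1,2)$ and $(2,3)$ lie in the attractive regime while the extreme pair $(1,3)$ lies in the repulsive one; carrying out the sign bookkeeping of the previous paragraph in this configuration yields $a=b=c$ (concretely $=-2h$), so the field is $(0,0,0)$ at every point of $\mathcal{R}$. It then remains to check three routine points: (i) $\mathcal{R}$ is nonempty and open -- e.g.\ $(0,0.3,0.6)\in\mathcal{R}$, and $\mathcal{R}$ is cut out by strict inequalities -- so it has nonempty interior; (ii) on $\mathcal{R}$ no pairwise distance equals $0$ or $1/2$, so the gradient is genuinely defined there and the kink of $\psi$ is avoided; and (iii) $\mathcal{R}\cap(\mathcal{D}_3\cup\mathcal{E}_3)=\varnothing$, since every point of $\mathcal{R}$ has three distinct coordinates, whereas points of $\mathcal{D}_3$ have all coordinates equal and each point of $\mathcal{E}_3$ has a repeated coordinate.

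Finally, since the field vanishes on the open set $\mathcal{R}$, each $p\in\mathcal{R}$ is an equilibrium whose linearization is the zero matrix, and every trajectory that starts in $\mathcal{R}$ stays put; hence $p$ is Lyapunov stable but not asymptotically stable -- a neutral fixed point. This exhibits the asserted subsets of $\mathcal{C}_3\setminus(\mathcal{D}_3\cup\mathcal{E}_3)$ with nonempty interior made up of neutral fixed points.

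The one genuinely delicate step is the sign bookkeeping for $a,b,c$: the unit-direction factor $d_{ij}/|d_{ij}|$ enters the extreme pair $(1,3)$ with the opposite effect it has on the adjacent pairs $(1,2)$ and $(2,3)$, and it is precisely this asymmetry that forces the ``two short gaps, one long gap'' shape of $\mathcal{R}$ -- a sign slip would instead describe a region flowing to consensus (toward $\mathcal{D}_3$) or to polarization (toward $\mathcal{E}_3$) rather than a region of equilibria. Everything else is routine, together with the easy observation that a continuum of equilibria forces the ``neutral'' label, since the hyperbolic stable/unstable dichotomy is vacuous when the vector field vanishes on a whole neighborhood.
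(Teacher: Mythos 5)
Your proposal is correct and follows essentially the same route as the paper: the constant slope magnitude of the tent forces the three pairwise contributions to cancel exactly when the two adjacent gaps are below $\tau$ and the extreme gap is above it, and your region $\mathcal{R}$ is precisely the region $R_1$ the paper describes in its appendix. If anything, your write-up is more complete than the paper's, which only explicitly exhibits the diagonal line of perturbations $p+\zeta(1,1,1)$ in the proof body and defers the full open regions to the appendix, whereas you verify directly that the field vanishes on all of $\mathcal{R}$ and spell out the neutrality argument.
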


\begin{proof} Without loss of generality, we will consider the case
  $o_1 < o_2 < o_3$, other cases follows similarly. Let the point $p =
  (o_1, o_2, o_3)\in\mathcal{C}_3$ such that $d_{12} = o_2 - o_1 =
  \tau - \epsilon = o_3 - o_2 = d_{23}$ and $d_{13} = o_3 - o_1 = 2
  (\tau - \epsilon) > \tau$. Since the potential function has the same
  slope in magnitude on both sides of $\tau = 0.5$, by
  ~(\ref{eq:totalGrad}) we have:

\begin{equation} 
\begin{array}{lllllll}
\dot{\vv{\mathbf{o}}} &=& \nabla \Psi_{123} \\
             &=& ( \alpha h ,\: -\alpha h,\: 0) + ( 0 ,\: \alpha h,\: -\alpha h) + ( -\alpha h ,\: 0,\: \alpha h) \\
             &=& ( 0, \: 0, \: 0)
\end{array} 
\end{equation}
Moreover, the point $P$ is not an isolated 
fixed point. Notice that the point $p$ can 
be perturbed to $\hat{p} = p + \zeta (1, 1, 1)$ 
and still be a fixed point as long as 
$ \tau < \hat{d}_{13} < 1 $ and $ 0 < \hat{d}_{12}, \hat{d}_{23} < \tau$, 
where $\hat{d}_{ij} = | \hat{o}_i  - \hat{o}_j| $.
\end{proof}

\begin{exmp} Let $G_3$ be a fully connected 
network with potential tent functions, 
$\tau = 0.5$ and $h = 1$, for each edge. 
Let $(o_1,\: o_2,\: o_3) = (0.2, \: 0.5, \: 0.8)$. 
Then each agent is pushed and pulled by 
the same force, $ a = b = c = 1$ above, by 
the other two agents, and therefore they 
do not move, hence, we have a fixed 
point and indeed a region of fixed points.
\end{exmp}
In the Thm~(\ref{fixedPointRegion}) we 
considered only the case $o_1 < o_2 < o_3$. 
There are five other possible cases, and all 
these six regions are mutually exclusive. 
The regions are given in Sec. (\ref{sec:Appendix}) 
and it is shown that their interior are mutually disjoint and
 afterward we show that there must be 
 fixed points inside these regions.

\bigskip

These fixed points are not specific to three 
agent systems. In fact, we can immediately extend the results to the case in 
which there are $N = 3k$ agents. 
\begin{corr}\label{TheCorollary} Let $G = (V,E)$ be a fully connected graph with $|V|= N =
  3k$ for some $k \in \mathbb{N}$ and let all the agents have the same
  tent potential function $\psi$ with peak at $0.5$.  Then there the
  are subsets of
  $\mathcal{C}_N\setminus(\mathcal{E}_N\cup\mathcal{D}_N)$ with
  non-empty interior which are comprised of fixed points.

\begin{proof}
Let $p = (o_1, \: o_2, \: o_3)$ be a 
neutrally stable fixed point in the $R_1$ derived 
above. Moreover, let $S_i$ be a set of $k$ agents in the
$\delta_i$ neighborhood of $o_i$, $ i \in \{1, \: 2, \: 3 \}$. 
Furthermore, let  
\begin{align*}
 \max\{S_2\} - \min\{S_1\} < 0.5 - \epsilon \\
 \max\{S_3\} - \min\{S_2\} < 0.5 - \epsilon \\
 \min\{S_3\}  - \max\{S_1\} > 0.5 + \epsilon
\end{align*}
 Then since the potential function has constant 
 and the same slope size on the 
 whole region except $ (0.5 - \epsilon, 0.5 + \epsilon)$, 
 the force applied to nodes in $S_i$ by nodes in $S_j$ 
 and nodes in $S_k$- $i \neq j \neq k $- would cancel out 
 and the only force working on nodes in the $S_i$ 
 is applied by nodes in the $S_i$, and therefore, 
 they all would come to consensus at a point 
 inside the $\delta_i$ neighborhood of $o_i$, 
 denote it by ${o_i^*}$ and therefore, 
 $p^* = (o_1^*, \: o_2^*, \: o_3^*)$ would be a fixed point inside the $R_1$.
\end{proof}
\end{corr}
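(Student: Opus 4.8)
The plan is to bootstrap from Theorem~\ref{fixedPointRegion} by replacing each of the three agents with a balanced cluster of $k$ agents. Start from a point $p=(o_1,o_2,o_3)$ in the region $R_1$ of neutral fixed points of the three-agent system, so that $o_1<o_2<o_3$ with $d_{12}=d_{23}=\tau-\epsilon<\tau$ and $d_{13}=2(\tau-\epsilon)>\tau$. Partition the $N=3k$ agents into sets $S_1,S_2,S_3$ of size $k$, with $S_i$ confined to a $\delta_i$-neighborhood of $o_i$, and choose the $\delta_i$ small enough that the inter-cluster geometry matches that of $p$: every $S_1$--$S_2$ and $S_2$--$S_3$ pair lies at distance $<\tau=0.5$ (attractive) while every $S_1$--$S_3$ pair lies at distance $>0.5$ (repulsive). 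This is precisely the content of the three separation inequalities in the statement.

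Next I would show that the inter-cluster contribution to the force on any agent vanishes. Fix $m\in S_1$. By Eq.~\eqref{eq:equalNudgeCont} each agent of $S_2$ contributes an attractive pull on $o_m$ of magnitude $\psi'(d)=h/\tau$ directed upward, and each agent of $S_3$ a repulsive push of magnitude $|\psi'(d)|=h/(1-\tau)$ directed downward; since $\tau=0.5$ forces $h/\tau=h/(1-\tau)$ and $|S_2|=|S_3|=k$, these cancel exactly. The analogous cancellation holds for $m\in S_2$ (pulled down by $S_1$, up by $S_3$, in equal amounts) and for $m\in S_3$. Hence the net force on each agent is the sum of its intra-cluster interactions alone. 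Within each $S_i$ every pair is at distance $<0.5$, so those interactions are purely attractive; the cluster therefore contracts, its diameter is non-increasing, its convex hull stays inside the original $\delta_i$-neighborhood (so the inter-cluster geometry is preserved for all $t$), and $S_i$ converges to a consensus value $o_i^*\in(o_i-\delta_i,o_i+\delta_i)$. The limit $p^*$, with all of $S_i$ at $o_i^*$, is a genuine fixed point: inter-cluster forces still cancel and intra-cluster forces are now zero. Moreover $p^*\notin\mathcal{D}_N$ (the values $o_1^*,o_2^*,o_3^*$ are distinct) and $p^*\notin\mathcal{E}_N$ (each $o_i^*$ lies in the open interval $(0,1)$).

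The step I expect to be the real obstacle is upgrading this to a \emph{set} of fixed points with \emph{non-empty interior} in $\mathcal{C}_N$. Letting $p$ range over $R_1$ (including the $\zeta(1,1,1)$ direction of Theorem~\ref{fixedPointRegion}) and collapsing produces a family of such $p^*$, but this family is only $3$-parameter and hence has empty interior in $[0,1]^{3k}$ once $k\ge 2$: any configuration in which some $S_i$ is non-degenerate fails to be fixed, because the topmost agent of $S_i$ feels an unbalanced downward pull from the other $k-1$ members of $S_i$. To make the literal claim go through one needs either (i) to read ``non-empty interior'' relative to the $3$-dimensional manifold of $3$-clustered configurations, in which case the argument above suffices, or (ii) a genuinely $3k$-dimensional construction in which \emph{all} pairwise pulls balance at a generic configuration --- a natural candidate is to allow many collapsed sub-clusters with sizes $n_1,\dots,n_m$ solving the combinatorial balance equations $\sum_{l\ne j} n_l\,\varepsilon_{jl}=0$, $\varepsilon_{jl}\in\{\pm1\}$, but showing that such a solution exists (this seems to interact with the parity of $k$, the extreme agent having $N-1$ one-sided neighbors) and that it fills out a full-dimensional region is the delicate point. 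I would first settle which of (i) or (ii) is meant and restate the corollary accordingly; the dynamical collapse argument itself is routine once the tent is smoothed at its tip and at coincident opinions, as the earlier Remark already allows.
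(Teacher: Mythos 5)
Your construction is the same as the paper's. The authors also start from a neutral fixed point $p=(o_1,o_2,o_3)$ in $R_1$, place $k$ agents in a $\delta_i$-neighborhood of each $o_i$ subject to exactly your three separation inequalities, observe that the constant slope of the tent makes the $S_j$-on-$S_i$ and $S_\ell$-on-$S_i$ forces cancel ($k$ pulls of magnitude $2h$ upward against $k$ pushes of magnitude $2h$ downward), and conclude that each cluster collapses under its internal attraction to a consensus value $o_i^*$, so that $p^*=(o_1^*,\dots,o_1^*,o_2^*,\dots,o_2^*,o_3^*,\dots,o_3^*)$ is a fixed point. Up to that point your argument and theirs coincide, and both are sound (modulo smoothing the tent at its tip and at coincident opinions, which the paper's earlier remark permits).

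The obstacle you flag at the end is real, and it is a gap in the paper's own proof rather than in your reconstruction. The argument only produces the $3$-parameter family of collapsed configurations, which has empty interior in $\mathcal{C}_N=[0,1]^{3k}$ once $k\ge 2$: at any nearby configuration with a non-degenerate cluster, the extreme members of that cluster feel an uncancelled intra-cluster pull of magnitude at least $2h$, so they are not fixed. Worse, for $k$ even each agent has $3k-1$ (an odd number of) neighbours, each contributing $\pm 2h$ to its velocity, so no configuration whose pairwise distances avoid $0$ and the tent tip can be fixed at all. What the proof actually establishes is that an \emph{open} set of initial conditions (the clustered ones) flows to a non-consensus, non-polarized fixed point; either that reading, or ``non-empty interior'' taken relative to the $3$-dimensional manifold of fully clustered states, is what is needed for the corollary to be literally true. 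Your option (ii) --- seeking a genuinely full-dimensional set of fixed configurations via balanced sub-cluster counts --- is not attempted in the paper, so you should not feel obliged to supply it; but you are right that as stated the ``non-empty interior'' claim does not follow from this construction.
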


\bigskip

\begin{remark}
  In the case of this potential $\Psi_{123}$ we know that when we are
  not in at a fixed point, the rate of energy decrease is bounded
  below, so we are guaranteed to converge to a fixed point in time
  less than or equal to $\frac{\sup_{\mathcal{C}_3} \Psi_{123}}{\alpha h}$.
\end{remark}

Next, we look at bell-shaped potentials and the fixed 
points they generate using a simple example potential.

\subsubsection{Bell shaped potential}
Take a symmetric potential function about $x = 0.5$ 
with zero slope at $x = 0$ and $x = 1$. We can obtain 
this by taking a smooth function between zero and 
$0.25$ and then rotate, reflect and shifting it to make 
up the function, like Fig.~(\ref{fig:rotatedPotential}), 
which corresponds to the function by Eq. (\ref{eq:RotateFunct}).

\begin{equation} \label{eq:RotateFunct}
\psi(x) = \left\{
  \begin{array}{lr}
    \vspace{.1in}
    x^2   &  \hspace{.1in} 0 \leq x \leq 0.25\\
    \vspace{.1in}
    - ( x - \frac{1}{2} )^2 + \frac{1}{8}  & \hspace{.1in} 0.25 < x < 0.75 \\
      ( x - 1)^2  &  0.75 \leq x \leq 1\\
  \end{array}
\right.
\end{equation}
\begin{figure}[httb!]
  \centering
\includegraphics[width=.4\linewidth]{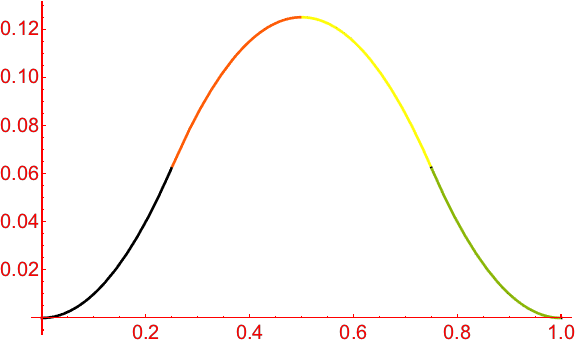}
\caption{Potential function by rotations, reflections and shifts of $x^2$}
\label{fig:rotatedPotential}
\end{figure}
Next we show a symmetric bell-shaped potential function has lines of neutrally stable fixed points.
\begin{theorem}\label{Bell-Shaped-Fixed-Points}
Let $G = (V, E)$ be a fully connected network of 3 
agents and let each agent to have the same 
bell-shaped potential $\psi$. Then there exist 
six line segments of fixed points. Moreover, 
these regions can be turned into unstable 
fixed point regions by perturbation of the potential function.

\begin{proof}
Let $p = (o_1^*, \: o_2^*, \: o_3^*)$ be a given fixed point in 
$(\mathcal{D} \cup \mathcal{E})'$ where $o_1^* < o_2^* < o_3^*$. 
Moreover, let $ f_{ij} =  \frac{\alpha}{2} \psi'_{ij}$ be the force imposed 
by agent $i$ on agent $j$ where $\psi'_{ij} = \psi'(d_{ij}) $. 
Since $p$ is a fixed point the forces $f_{ij}$ in Fig.~(\ref{fig:BellForces}) 
have the same magnitude.  Since $f_{21} = f_{12}$ we must have 
$f_{12} = f_{32}$ in order for agent two to not move. Therefore, 
$d_{12} = d_{23} = d \in (0.25, \: 0.5)$ and $d_{13} = 2d \in (0.5, \: 0.75)$. 
Note that $d_{12}$ and $d_{23}$  cannot be in 
$(0, \: 0.25)$ because otherwise, 
$d_{13} \in (0, \: 0.5)$  and there would 
not exist a repelling force. Since 
$2d \in (0.5, 0.75)$ we have to have $d \in (0.25, \: 0.375)$. 

\begin{figure}[httb!]
  \centering
\includegraphics[width=.6\linewidth]{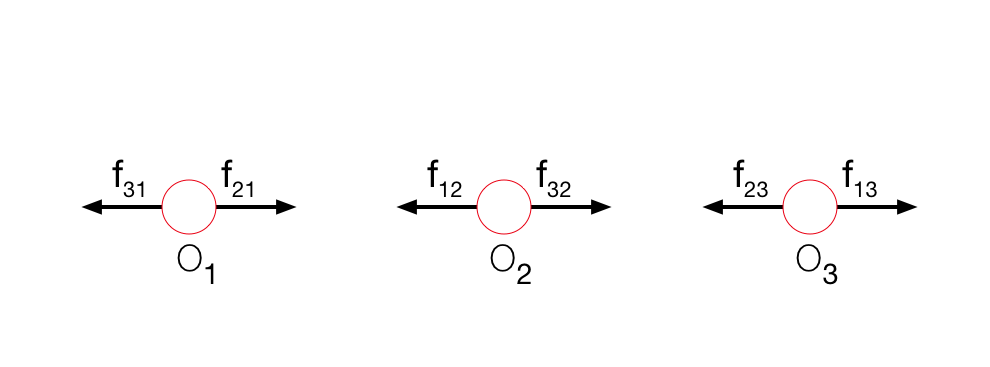}
\caption{}
\label{fig:BellForces}
\end{figure}

Starting from agent 1, we want to have $F_1 = f_{21} + f_{31} = 0$. 
Since the potential function is symmetric about $x = 0.5$ 
there exist the points $d \in (0.25, \:0.5)$ 
such that $\psi'(d) = -\psi'(2d)$ and 
consequently $F_1 = 0$. The same is true for $F_2$ and $F_3$. 

Neutrality of $p$ is shown below. 
In our system we have $\dot{\vv{\mathbf{o}}} = \nabla \psi_{123}$. 
Denote different components as follows:
\begin{equation} \label{eq:VFfunction1}
  \left\{ \begin{array}{lr}
    \vspace{.1in}
    \dot{o}_1  &= f(o_1,o_2,o_3)\\
    \vspace{.1in}
    \dot{o}_2  &= g(o_1,o_2,o_3)\\
    \dot{o}_3  &= h(o_1,o_2,o_3)\\
  \end{array}
  \right .
\end{equation}
Let $p = (o_1^* , \: o_2^* , \: o_3^*)$ be a fixed 
point of the three-agent interactions. To show 
these are unstable fixed points we would use 
linearization of the potential about the fixed 
point. Let $u = o_1 - o_1^*$, $v = o_2 - o_2^*$ 
and $w = o_3 - o_3^*$ be small perturbation 
of the fixed point. To see whether the 
perturbation grows or decays we have 
to derive differential equations for $u$, $v$ and $w$:
\begin{equation} 
\begin{array}{lllllll}
\dot{u} = \dot{o_1} &=& f(o_1^* + u, o_2^* + v, o_3^* + w) \\
             &=& f(o_1^*, o_2^*, o_3^*) + f_{o_1} u + f_{o_3} v + f_{o_3} w + \frac{1}{2!} \begin{bmatrix}
         u &  v & w 
        \end{bmatrix} H(o_1, o_2 ,o_3 )\begin{bmatrix}
         u \\
         v \\
         w\\
        \end{bmatrix} + \cdots\\
        &=&  f_{o_1} u + f_{o_2} v + f_{o_3} w  + O( u^2, v^2, w^2, uv, wv, uw) \\
\end{array} 
\end{equation}
Where $H(o_1, o_2 ,o_3) = \begin{bmatrix}
    f_{o_1o_1} & f_{o_1o_2} & f_{o_1o_3} \\
    f_{o_2o_1} & f_{o_2o_2} & f_{o_2o_3}  \\
    f_{o_3o_1} & f_{o_3o_2} & f_{o_3o_3}
\end{bmatrix}$, called the Hessian matrix and $f_{o_i}$ is derivative of $f$ with 
\vspace{.1in}
respect to $o_i$ evaluated at $( o_1^* , o_2^*, o_3^*)$. The same can be obtained for $v$ and $w$. And perturbation evolves according to
\begin{equation}\label{eq:LinearizedEq}
\begin{array}{lllllll}
\begin{bmatrix}
         \dot{u} \\
         \dot{v} \\
         \dot{w}\\
        \end{bmatrix} &=& \begin{bmatrix}
    f_{o_1} & f_{o_2} & f_{o_3} \\
    g_{o_1} & g_{o_2} & g_{o_3} \\
    h_{o_1} & h_{o_2} & h_{o_3}
\end{bmatrix}_{(o_1^*, o_2^*, o_3^*)} \begin{bmatrix}
         u \\
         v \\
         w\\
        \end{bmatrix} +  O( u^2, v^2, w^2, uv, wv, uw)\\
        &=& J(o_1^*, o_2^*, o_3^*)\begin{bmatrix}
         u \\
         v \\
         w\\
        \end{bmatrix} +  O( u^2, v^2, w^2, uv, wv, uw)\\
\end{array}
\end{equation}
which by dropping the quadratic terms we get the linearized system. 
Dropping the learning rate $\alpha$, we have:
\begin{equation} 
\begin{array}{lllllll}
\dot{o} = \nabla \Psi_{123} &=& \nabla \psi_{12} + \nabla \psi_{32} + \nabla \psi_{31}  \\                      &=& (\psi'_{12}(|o_2 -  o_1|) \frac{o_2 - o_1}{|o_2 - o_1|}, \: - \psi'_{12}(|o_2 -  o_1|) \frac{o_2 - o_1}{|o_2 - o_1|}, \: 0) + \\
             & & (0, \:  \psi'_{32}(|o_3 -  o_2|) \frac{o_3 - o_2}{|o_3 - o_2|}, \: -\psi'_{32}(|o_3 -  o_2|) \frac{o_3 - o_2}{|o_3 - o_2|}) + \\
             & & ( -\psi'_{31}(|o_3 - o_1|) \frac{o_3 - o_1}{|o_3 - o_1|} , \: 0, \: \psi'_{31}(|o_3 - o_1|) \frac{o_3 - o_1}{|o_3 - o_1|}) \\
             & &\\
             \vspace{.1in}
             
             &=& 
             \begin{bmatrix}
         \psi'_{12}(|o_2 -  o_1|) \frac{o_2 - o_1}{|o_2 - o_1|}  -\psi'_{31}(|o_3 - o_1|)    \frac{o_3 - o_1}{|o_3 - o_1|} \\
         \psi'_{32}(|o_3 -  o_2|) \frac{o_3 - o_2}{|o_3 - o_2|} - \psi'_{12}(|o_2 -  o_1|) \frac{o_2 - o_1}{|o_2 - o_1|} \\
         \psi'_{31}(|o_3 - o_1|)  \frac{o_3 - o_1}{|o_3 - o_1|} -\psi'_{32}(|o_3 -  o_2|) \frac{o_3 - o_2}{|o_3 - o_2|}\\
        \end{bmatrix}^T\\
        &=& \begin{bmatrix}
         f(o_1, o_2, o_3) \\
         g(o_1, o_2, o_3) \\
         h(o_1, o_2, o_3) \\
        \end{bmatrix}^T
        
\end{array} 
\end{equation}
Therefore, the Jacobian matrix of~(\ref{eq:LinearizedEq}) is given by
\[\small{
\mathbf{J} = \begin{bmatrix}
    \psi''_{13}(|o_3-o_1|) - \psi''_{12}(|o_2-o_1|) & \psi''_{12}(|o_2-o_1|) & -\psi''_{13}(|o_3 -o_1|) \\
    \psi''_{12}(|o_2-o_1|) & -\psi''_{23}(|o_3 - o_2|) - \psi''_{12}(|o_2-o_1|) & \psi''_{23}(|o_3 - o_2|) \\
    -\psi''_{13}(|o_3 - o_1|) & \psi''_{23}(|o_3 - o_2|) & \psi''_{13}(|o_3 - o_1|) - \psi''_{23}(|o_3 - o_2|)
\end{bmatrix}
}
\]
Since $d_{12} = d_{32}$ the Jacobian has the following structure:
\[
\mathbf{J} = \begin{bmatrix}
  b - a &  a    & -b \\
  a     & -2a  &  a \\
 -b &      a    & b-a
\end{bmatrix}
\]
The Jacobian matrix is a real symmetric matrix, 
and therefore, the eigenvalues are real, and 
therefore, centers or spirals do not exist. 
Since there is a line of fixed points, we 
cannot have three dimensions  all going 
to the fixed points. That can also be seen 
from the structure of the Jacobian matrix. 
Due to the  structure of the matrix, determinant 
of $J$ is zero, therefore, one of the eigenvalues 
is zero, which indicates fixed points are not isolated. 
Moreover, in a real symmetric matrix, number 
of negative eigenvalues is the same as number of negative pivots.
 The Jacobian matrix $J$ given above always have two negative eigenvalues which can
 be easily seen by looking at pivots signs.
 
 However, we can perturb 
 the potential function given by Eq.~(\ref{eq:RotateFunct}) 
 using non-analytical functions at the given $d_{12}, d_{13}, d_{23}$, so that the slope is kept 
 the same, but the second derivative is changed so that the 
 Jacobian matrix has one positive eigenvalue, one negative
and a zero eigenvalue. (Two positive eigenvalues is impossible!)
 
We proved the theorem for $o_1 < o_2 < o_3$, the other five cases 
 follows similarly.
\end{proof}
\end{theorem}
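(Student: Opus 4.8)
The plan is to handle the ordering $o_1 < o_2 < o_3$ and recover the other five orderings by relabeling nodes, using that the energy $\Psi_{123}$ in Eq.~(\ref{eq:totalEnergyDef}) is permutation symmetric. For existence I would impose the symmetric ansatz $d_{12} = d_{23} = d$, so that $d_{13} = 2d$; with this ansatz the $1$-$2$ and $2$-$3$ contributions to $\dot o_2$ cancel automatically, and by Eq.~(\ref{eq:totalGrad}) the full vector field vanishes exactly when the attractive pull on agent $1$ from agent $2$ and the repulsive push on agent $1$ from agent $3$ have equal magnitude, i.e.\ $\psi'(d) = -\psi'(2d)$. A repulsive $1$-$3$ interaction requires $2d$ to lie in the repulsive range of $\psi$ and $d$ in its attractive range; for the potential of Eq.~(\ref{eq:RotateFunct}) this confines $d$ to $(0.25,0.5)$, and I would apply the intermediate value theorem to the continuous map $d \mapsto \psi'(d)+\psi'(2d)$ on $(0.25,0.375)$, where it changes sign (vanishing at $d = 1/3$), to produce a fixed point. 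The key structural point is translation invariance: the vector field depends on the $d_{ij}$ only, so a fixed point $p$ spawns an entire family $p + \zeta(1,1,1)$ of fixed points, and this family survives as long as the three differences remain in the relevant open ranges and the point stays in $\mathcal{C}_3$ (an open condition), giving a nondegenerate line segment. Running this in all six orderings produces six segments, whose interiors I would check are pairwise disjoint, as in the tent-potential case worked out in the appendix.

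For the stability part I would linearize. Differentiating the components of the gradient field $\nabla\Psi_{123}$ and using $d_{12}=d_{23}$ reduces the Jacobian to the real symmetric matrix $\mathbf J$ displayed just above the theorem, whose entries involve only $a := \psi''(d_{12})$ and $b := \psi''(d_{13})$. Three features of $\mathbf J$ would then do the work: (i) it is real symmetric, so all eigenvalues are real and no centers or spirals occur; (ii) its rows sum to zero, so $(1,1,1)^{T} \in \ker \mathbf J$, hence $\det\mathbf J = 0$ and one eigenvalue is exactly $0$, corresponding to the tangent direction of the line of fixed points; and (iii) for the potential of Eq.~(\ref{eq:RotateFunct}) both $d$ and $2d$ lie in $(0.25,0.75)$, where $\psi'' \equiv -2$, so $a=b$ and a direct pivot (inertia) computation fixes the signs of the two remaining eigenvalues.

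To upgrade these segments to unstable fixed point regions I would perturb $\psi$ only in small neighborhoods of $d$ and of $2d$, adding $C^{\infty}$ non-analytic bumps to $\psi'$ that vanish at $d$ and at $2d$, so the slopes $\psi'(d), \psi'(2d)$, and therefore the fixed-point relation, are untouched, but with prescribed derivatives there, so $a$ and $b$ can be nudged to arbitrary values near $-2$. Since the two nonzero eigenvalues of $\mathbf J$ vary continuously with $(a,b)$, I would exhibit an admissible choice making one of them positive; the accompanying assertion that two positive eigenvalues cannot occur would follow by combining $\det\mathbf J = 0$ with the sign of the sum of the $2\times 2$ principal minors of $\mathbf J$, which equals the product of the two nonzero eigenvalues. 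The main obstacle is precisely this last bookkeeping: checking that the nonzero spectrum of the reduced symmetric $\mathbf J$ is governed by the signs of $a$ and $b$ in the claimed way, and that the perturbation can realize the required curvatures while keeping $d$ inside its admissible subinterval and introducing no extra zeros of $\psi'$. The existence and line-segment parts are comparatively routine, and the remaining five orderings follow from the same computation.
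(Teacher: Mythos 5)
Your proposal follows essentially the same route as the paper's proof: the reduction to $d_{12}=d_{23}=d$ and $d_{13}=2d$ via the balance of forces on agent 2, the condition $\psi'(d)=-\psi'(2d)$ locating $d$ in $(0.25,0.375)$, translation invariance along $(1,1,1)$ to obtain line segments in each of the six orderings, the same real symmetric Jacobian with row sums zero and hence a zero eigenvalue, and a curvature-only perturbation of $\psi$ to alter the signs of the two nonzero eigenvalues. Your invocations of the intermediate value theorem (where the paper appeals to symmetry of $\psi$ about $0.5$), of $(1,1,1)^{T}$ as an explicit kernel vector, and of the sum of principal $2\times 2$ minors to rule out two positive eigenvalues are minor tightenings of the paper's argument rather than a different approach.
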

\begin{exmp} 
Let the potential function of the three agents be given by Eq.~(\ref{eq:RotateFunct}). 
The fixed points in $(\mathcal{D} \cup \mathcal{E})'$ are of the form $p = (o_1^*, \: o_2^*, \: o_3^*) = ( o - \frac{1}{3}, \: o, \: o + \frac{1}{3})$.
The line segment generated by $o$ where $o - \frac{1}{3} \geq 0 $ and  $o + \frac{1}{3} \leq 1$ is the line segment of neutrally stable fixed points since:
\[F_1 = f_{21} + f_{31} = \frac{\alpha}{2} \left( \frac{1}{3} -   \frac{1}{3} \right) = 0 = F_2 = F_3 \]
For example, if $p = (1/6, 1/2, 5/6)$, then 
$J(o_1^*, o_3^*, o_3^*)= \frac{1}{3}\begin{bmatrix}
  -2 & 1 &  1 \\
   1 & -2 & 1 \\
   1 & 1 &  -2 
\end{bmatrix}.$
Eigenvalues of the Jacobian are $\sigma(\mathbf{J}) = \{-3, 0 \}$.
\end{exmp}

\begin{remark}
Note that there are also fixed points of the forms 
$p = (a, \: a, \: a + 0.5)$, \linebreak $q = (a, a+\: 0.5, \: a)$ 
and $r = (a + 0.5, \: a, \: a)$ due to having derivative zero 
at end points and in the middle of the potential.
\end{remark}

\begin{remark}
  The bell shaped potential above is just one potential in a family of
  potentials determined by strictly increasing smooth functions
  $f:[0,0.25] \rightarrow \mathbb{R}^+ \cup \{0\}$ such that \linebreak $f(0) = f'(0) = 0$. The fact
  that the above potential generates a stable line of fixed points
  means that small enough perturbations of the bell shaped potential
  retain the non-consensus, non-polarized fixed point. Furthermore,
  using these bell-shaped potentials, we can obtain fixed points for
  systems with $N = 3k$ agents, as was done with the tent potential in
  Corollary~\ref{TheCorollary}.
\end{remark}

\subsection{Discrete game properties}
\begin{prop}\label{LiapunovDance}
  Let $G_3$ be a fully connected network of three nodes and let $\psi$
  be a tent potential function centered at $\tau = 0.5$, with maximum
  $h$ and a learning rate $\alpha$. Let $p = (\hat{o}_1, \:
  \hat{o}_2, \: \hat{o}_3)$ be a given point in the unit cube so that:
  \begin{enumerate}
  \item $0 < d^{(0)}_{12}, d^{(0)}_{23} < 0.5$
  \item $d^{(0)}_{13} > 0.5$
  \item $\alpha h < \min \{ 0.5 - d^{(0)}_{23}, \frac{d^{(0)}_{13} -
      0.5}{2} \}$ and
  \item the deterministic, asynchronous game order is
    $(1,2)\rightarrow(2,3)\rightarrow(3,1)\rightarrow(1,2)\rightarrow\cdots$
  \end{enumerate}
  Then $p$ and all the points in an open neighborhood of p are
  neutrally stable in the sense that $p^{(0)} = p^{(3)} = p^{(6)} =
  \cdots = p^{(3k)} = \cdots$. I.e. all points in that neighborhood
  define periodic orbits of with period 3.
\end{prop}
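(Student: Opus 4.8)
The plan is to follow one full round $(1,2)\to(2,3)\to(3,1)$ of the one-at-a-time update rule \eqref{eq:updateNoWeights}, with $\psi$ the tent centered at $\tau=0.5$ of height $h$, and to verify that the composition of the three maps is the identity on a neighborhood of $p$. First I would record two elementary facts. Conditions (1) and (2), read through $d_{ij}=o_i-o_j$, force $o_1>o_2>o_3$ and $d_{13}^{(0)}=d_{12}^{(0)}+d_{23}^{(0)}$. And for this tent, $\psi'(x)=2h$ on $(0,0.5)$ and $\psi'(x)=-2h$ on $(0.5,1)$, so in any single interaction each of the two participants moves by exactly $\alpha h$, toward each other if their difference is $<0.5$ and apart if it is $>0.5$; which branch occurs is all that ever matters.

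Then I would run the cycle. \textbf{Step $(1,2)$:} $d_{12}^{(0)}\in(0,0.5)$ by (1), so this step is attractive, sending $o_1\mapsto o_1-\alpha h$, $o_2\mapsto o_2+\alpha h$, $o_3$ fixed; consequently $d_{23}$ \emph{increases} to $d_{23}^{(0)}+\alpha h$ and $d_{13}$ \emph{decreases} to $d_{13}^{(0)}-\alpha h$. \textbf{Step $(2,3)$:} hypothesis (3) via $\alpha h<0.5-d_{23}^{(0)}$ gives $d_{23}=d_{23}^{(0)}+\alpha h\in(0,0.5)$, so this step is again attractive; it returns $o_2$ to $o_2^{(0)}$ and sends $o_3\mapsto o_3+\alpha h$, with $o_1$ fixed, leaving $d_{13}=d_{13}^{(0)}-2\alpha h$. \textbf{Step $(3,1)$:} hypothesis (3) via $\alpha h<(d_{13}^{(0)}-0.5)/2$ gives $|d_{13}|=d_{13}^{(0)}-2\alpha h\in(0.5,1)$, so this step is repulsive; it moves $o_1$ and $o_3$ apart by $\alpha h$ each, i.e.\ it returns $o_1$ to $o_1^{(0)}$ and $o_3$ to $o_3^{(0)}$, with $o_2$ untouched. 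Hence the state after three steps is again $p$, so $p^{(0)}=p^{(3)}=p^{(6)}=\cdots$, and since each individual step genuinely moves the state (e.g.\ $p^{(1)}\neq p^{(0)}$) the period is exactly $3$.

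For the open neighborhood, I would note that every region membership invoked above --- $d_{12}^{(0)}\in(0,0.5)$, $d_{23}^{(0)}+\alpha h\in(0,0.5)$, $d_{13}^{(0)}-2\alpha h\in(0.5,1)$ --- is a strict inequality in continuous functions of the initial coordinates, hence persists on an open set $U\ni p$; on $U$ the same branch-by-branch computation applies verbatim and the three-step return map is the identity. Shrinking $U$ (and, if one wants to be fully rigorous, asking that $p$ sit in the interior of the cube with $\alpha h$ small compared with its distance from $\partial([0,1]^3)$, so the clamping function never fires along the cycle --- a point I would treat as lightly as the tent-tip non-differentiability was treated earlier) then yields that every point of $U$ lies on a period-$3$ orbit, which is the claimed neutral stability.

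There is no deep obstacle here: the argument is a direct computation. The only care needed is the sign bookkeeping --- the $d_{ij}/|d_{ij}|$ factors and the choice of which participant plays the role of $i$ and which of $j$ in each labeled interaction (the rule is symmetric under that swap, so it is harmless) --- together with confirming that the three hypotheses line up one-to-one with the three branch choices. The single observation that makes this work is that after Step $(1,2)$ the gap $d_{23}$ has \emph{grown} while after Steps $(1,2)$--$(2,3)$ the gap $d_{13}$ has \emph{shrunk}, so condition (3) is exactly the constraint that keeps the second interaction attractive and the third repulsive, and the three displacements of size $\alpha h$ cancel in pairs.
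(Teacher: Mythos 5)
Your proposal is correct and follows essentially the same route as the paper's own proof: trace one round of the cycle, use $\alpha h < 0.5 - d^{(0)}_{23}$ to keep the $(2,3)$ interaction attractive, use $\alpha h < (d^{(0)}_{13}-0.5)/2$ to keep the $(3,1)$ interaction repulsive, and observe that the three displacements of size $\alpha h$ cancel in pairs, with the neighborhood statement following from the strictness of the inequalities. Your version is somewhat more explicit about the sign bookkeeping and the openness argument, but there is no substantive difference.
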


\begin{proof}
Since the step size $\alpha h < 0.5 - d^{(0)}_{23}$, 
after node 1 playing with node 2, we have $d^{(1)}_{23} < 0.5$, 
and therefore, nodes 2 and 3 attract each other, and node 2 
is back on its starting point. And since $\alpha h < \frac{d^{(0)}_{13} - 0.5}{2}$, 
we have $d^{(2)}_{13} > 0.5$, and therefore nodes 1 and 3 repel 
and they both are back on their initial positions. The loop would 
goes forever. Not only $p$ generates this periodic state, but also 
a neighbor of $p$, $(\hat{o}_2 + \eta_1, \: \hat{o}_2 + \eta_2 , \: \hat{o}_3 + \eta_3 )$ , 
in which the conditions on the $d^{(0)}_{ij}$ and $\alpha h$ 
are satisfied would be a periodic path. 
\end{proof}

\begin{remark}
  The discrete, asynchronous, deterministic game will approximate the
  continuous synchronous game as long as we stay away from the
  boundaries of the 6 neutral regions: how close we can get depends on
  how small the learning rate is.  If the step sizes in the game we
  are playing cause us to step across the boundary of the neutral
  region, then we converge to either consensus or polarization,
  depending on which boundary we step across. A game in which the
  play order is not deterministic, but is rather random, would
  generate a random walk that would eventually escape the neutral
  region.
\end{remark}










\section{Experimental Results}
\label{experiments}
In this section we present experiments\footnote{The codes for this experiments can be found here: https://github.com/HNoorazar/} examining the behavior of
networks whose parameters, such as population or topology or
interaction potential parameters, are fixed except one. Unless
otherwise is stated, the stopping condition is the window-convergence
(introduced below). In these preliminary studies, we are interested
in understanding how changing a parameter changes the average
stabilization time and the population of the polarization/consensus
clusters.

\begin{remark}
In the experiments we use the update rule $o_i^{(t+1)} =
o_i^{(t)} - \frac{\alpha} {2} \psi'(|d_{ij}|) d_{ij}$ rather than
$o_i^{(t+1)} = o_i^{(t)} - \frac{\alpha} {2} \psi'(|d_{ij}|)
\frac{d_{ij}}{|d_{ij}|} $ to prevent over shooting. This might be
considered as an adaptive learning rate.  
\end{remark}

\subsection{Convergence and stability}

One of the primary goals of an opinion dynamics model is to understand
the state of the system in the limit of an arbitrary number of time
steps. Does the system reach a stable steady state, oscillate between
a finite number of deterministic states, or is the outcome stochastic
or unstable?  In the case where multiple final states can be reached,
what is the probability of any given state being reached relative to
others?  To reason about this we must define the concepts of
convergence and stability with respect to the opinion
game. Convergence relates to a slowing of change within the system,
where we would say that the system has converged to a steady state if
the opinion state across the entire system has stopped
changing. \emph{Consensus} is a special case of convergence in which
the system has converged and the opinion state across the individuals
is one in which every individual agrees with the others. This is the
final state of the DeGroot averaging model. Our model admits converged
states in which consensus is not present (e.g., reaching a final state
in which opinions are split between two sub-populations). Stability
relates to the sensitivity of the system to changes when it has
reached a steady state. A stable system will be able to tolerate some
degree of change while converged, while an unstable system may enter
into a non-steady state when a change occurs and ultimately end up in
a new converged final state.

\theoremstyle{definition}
\subsubsection{Definition of Window-convergence}

In order to decide when a system has converged, one could just wait
until nothing changes in a sequence of interactions involving all
pairs, but this is usually not a practical approach. Instead, we will use a
notion of convergence (or pseudo-convergence) in which we continuously
observe changes in some \emph{moving window in time} and stop when
some criterion has been met.

\bigskip

Let $\omega$ be the length of that time interval over which we monitor
opinion changes. Assume there is one topic we are tracking over
time. Let $E_s$ be a matrix of agents opinions from time $t = 1$ to
time $t = s$, where the $k^{th}$ column contains the state vector of
the opinions of the $N$ agents on the topic at time $k$.  In this
approach to a stopping criterion, we consider only the last $\omega$
columns of $E_s$, which we denote by $S_s$': I.e. $S_s =
E(1:N,s-\omega+1:s)$. Define the $N$ dimensional column vector $F_s$:
\[F_s(i) = \max\{\max_{j=1,...,\omega} S_s(i,j) - \min_{j=1,...,\omega}
S_s(i,j) - \eta, 0\}.\]

\bigskip

\noindent We stop when F is the zero vector. 


\subsection{Effect of Initial Opinion}
In begin to see how initial opinions effect stabilization time, we
generated initial opinions using normal distributions with a mean of
$0.5$ and a range of variances. In this case, the game was played by
picking pairs at random and calculating changes in opinion for each
node/agent based on the variational model determined by a tent function
with peak at $0.5$.

Define $\mu_t$ to be the mean time to stability, $\sigma_t$ the
variance in the stabilization times, and $\mu_p$ to be the mean
fraction of final states that are polarized (as opposed to converging
to consensus). The following figures plot (a) initial opinion variance
versus the mean and standard deviation of stabilization time, $\mu_t$
and $\sigma_t$, and (b) initial opinion variance versus the mean
polarized fraction, $\mu_p$.

For a given $\sigma$, 100 initial opinions were drawn from the normal
distribution with mean 0.5 and standard deviation $\sigma$,
$N(0.5,\sigma)$. For every initial opinion, the experiment was run 100
times to begin to average out the effect of the random order in which
the game was played.

\begin{figure}[httb!]
  \centering
\begin{subfigure}{.35\textwidth}
  \includegraphics[width=1\linewidth]{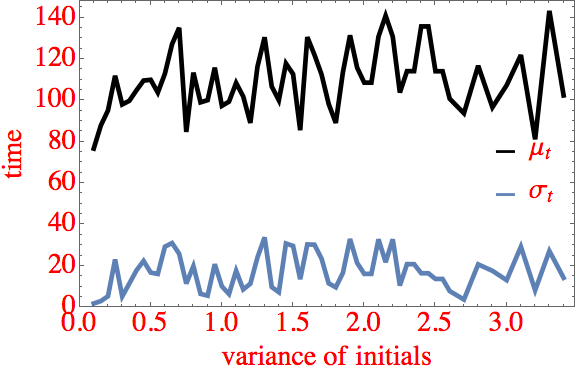}
  \caption{Effect of initial state on stabilization time}
  \label{fig:stepPolFull}
\end{subfigure}%
\hspace{.5in}
\begin{subfigure}{.35\textwidth}
  \includegraphics[width=1\linewidth]{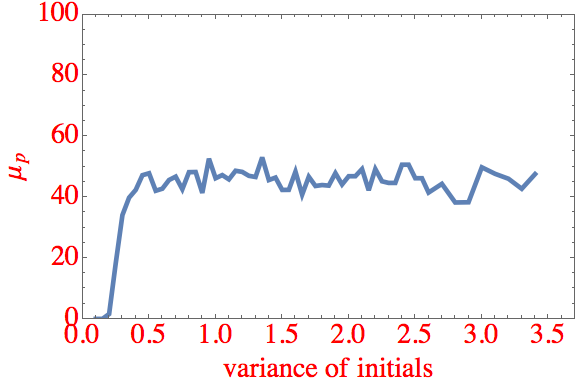}
  \caption{Effect of initial state on polarization count}
  \label{fig:stepTimeFull}
\end{subfigure}%
\caption{Initial state effects}
\label{fig:InitialEffect}
\end{figure}


\subsection{Effect of Tent Tip}
\label{full-experiments-vary-potential}

The experiments in this section are done with a fully connected
interaction graph and a tent potential function where $\tau$, the peak
of the tent, moves between 0 and 1.  Three experiments were run, each
with a different strategy for picking initial conditions:

\begin{enumerate}
\item \emph{Step initials}, in which the opinions were spread out
between zero and one equally, with the same step size between them,
\item \emph{Uniform initials}, in which initial opinions were sampled from
  the uniform distribution on $[0,1]$, and
\item \emph{Normal initials}, in which initial opinions were sampled
  from a normal distribution. The desired
  number of samples were drawn from a normal distribution with mean 0 and
  standard deviation 1, and then shifted and scaled so that the samples
  all lie within $[0,1]$.
\end{enumerate}
In each of these experiments there were 20 nodes/agents were involved and 
for each initial condition, the game is played to convergence 1000 times.
\subsubsection{Step Initials} 
Table~\ref{table:stepInitPot} and Figure~\ref{fig:StepPeak}\subref{fig:stepPolFull} show how polarization varies as the tent peak
$\tau$ moves from zero to 1. Because there is no randomness in the initial
conditions, each experiment is run 1000 times for each position of the
tent peak. This allows the effects of the randomness in the order of game
negotiations to be averaged out.

Observe that we get a probability of polarization of $0.5$ when
$\tau \approx 0.63$. Because $\tau=0$ implies always-polarization and $\tau=1$
implies always-consensus, we might be tempted to expect that at $\tau=0.5$ we
would have a probability of polarization of $0.5$. But this is a
result of identifying the domain of the potential with the opinion
space, even though this is not correct. The potential is a function of
opinion differences, not the opinions themselves. A deeper look at
this reveals that the important factors determining equilibrium states
are volumes of attractors in the opinion state spaces which are
controlled by the location of the tip of the tent. This is part of our ongoing
work to be included in a subsequent paper to follow this one.

\begin{figure}[httb!]
  \centering
\begin{subfigure}{.35\textwidth}
  \includegraphics[width=1\linewidth]{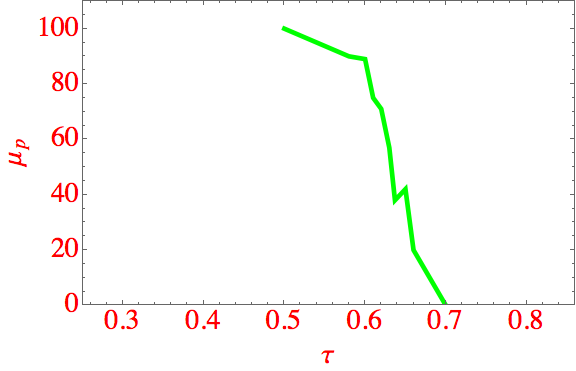}
  \caption{Polarization mean}
  \label{fig:StepPolPeak}
\end{subfigure}%
\hspace{.6in}
\begin{subfigure}{.35\textwidth}
  \includegraphics[width=1\linewidth]{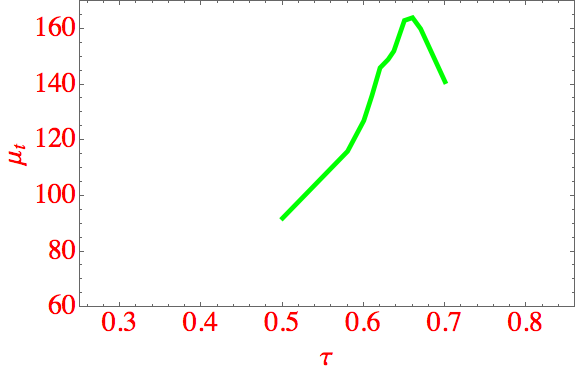}
  \caption{Mean of stabilization time}
  \label{fig:StepTimePeak}
\end{subfigure}%
\caption{Effect of moving $\tau$ on step initials.}
\label{fig:StepPeak}
\end{figure}
\begin{table}[!htb]
 \caption{Step initial opinions with different tent potentials with $\tau \in [0.5, 0.7]$.}
\begin{center}
    \begin{tabular}{| l | l | l | l | l | l| l| l| l | l | l | p{1cm} |}
    \hline
    $\tau$  & 0.5 & 0.58 & .6  &  0.61 & 0.62 & 0.63 & 0.637 & 0.65 & 0.66 & 0.67 & .7 \\ \hline
    Polarization count & 100 & 90 & 89 & 75 & 71 & 57 & 38 & 42 & 20 & 15 & 0 \\ \hline
    Stabilization time  & 92 & 116 & 127 & 136 & 146 & 149 & 152 & 163 & 164 & 160 & 141  \\ \hline
    \end{tabular}
\end{center}
 \label{table:stepInitPot}
\end{table}

\subsubsection{Uniform initials}
In this experiment we sampled the initial opinions from a uniform
distribution over $[0,1]$ 100 times, and for each sample the
simulation is run 1000 times.  As a result, both the order of
negotiations and effect of initial sampling are taken into effect in
the resulting statistics. The same is done for initial opinions
sampled from normal distribution. Thus, for each value of $\tau$, 100,000
experiments are run.  In Table~\ref{table:unifPotTable} we report the
normalized fraction of those experiments that resulted in
polarization by dividing the total number of polarizations by 1000.
\begin{figure}[httb!]
  \centering
\begin{subfigure}{.35\textwidth}
  \includegraphics[width=1\linewidth]{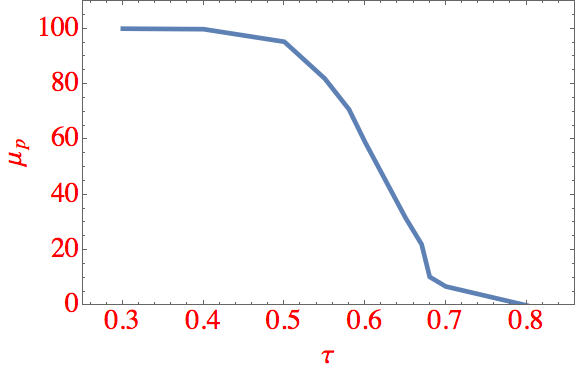}
  \caption{Polarization mean}
  \label{fig:UniformPolPeak}
\end{subfigure}%
\hspace{.6in}
\begin{subfigure}{.35\textwidth}
  \includegraphics[width=1\linewidth]{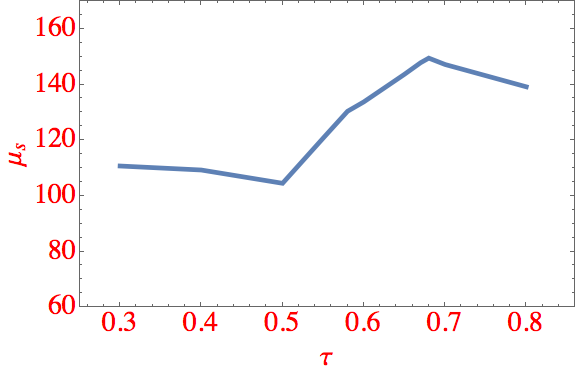}
  \caption{Mean of stabilization time}
  \label{fig:UniformTimePeak}
\end{subfigure}%
\caption{Effect of $\tau$ on uniform initials}
\label{fig:UniformPeak}
\end{figure}
\begin{table}[!htb]
 \caption{Uniform initial opinions with different tent potentials with $\tau \in [0.3,0.8]$.}
\begin{center}
    \begin{tabular}{| l | l | l | l | l | l| l| l| l | l | l | p{1cm} |}
    \hline
    $\tau$  & .3 & .4  & .5 &  .55 & .58 & .6 & .65 & .67 & .68 & .7 & .8  \\ \hline
   $\mu_p$     & 100 & 99.81 & 95.27 & 81.98 & 70.87 & 59 & 31.69 & 22.08 & 10.3 & 6.82 & 0  \\ \hline
   $\mu_s$  & 110.73 & 109.29 & 104.52 & 120.75 & 130.39 & 133.76 & 143.66 & 147.84 & 149.48 & 147.20 & 139.18  \\ \hline
    \end{tabular}
\end{center}
 \label{table:unifPotTable}
\end{table}

\subsubsection{Normal initials}

The initial 20 opinions here are generated by sampling the standard
normal distribution ($\mu =0$, $\sigma = 1$), and then 
shifted and scaled so that they fit in the interval $[0,1]$.

\begin{figure}[httb!]
  \centering
\begin{subfigure}{.35\textwidth}
  \includegraphics[width=1\linewidth]{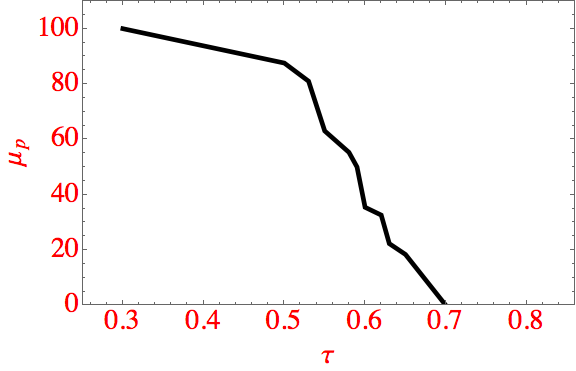}
  \caption{Polarization mean}
  \label{fig:NormalPolPeak}
\end{subfigure}%
\hspace{.6in}
\begin{subfigure}{.35\textwidth}
  \includegraphics[width=1\linewidth]{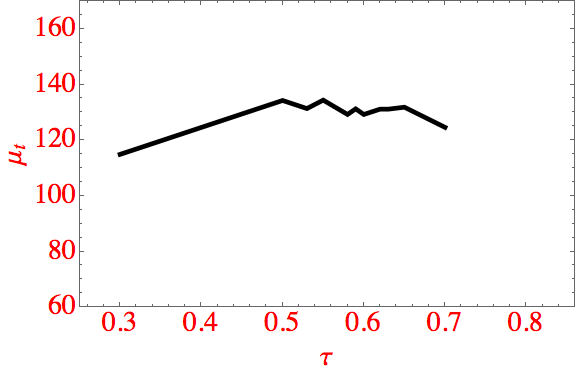}
  \caption{Mean of stabilization time}
  \label{fig:NormalTimePeak}
\end{subfigure}%
\caption{Effect of $\tau$ on normal initials}
\label{fig:NormalPeak}
\end{figure}

\begin{table}[!htb]
\caption{Normal initial opinions with different tent potentials with $\tau \in [0.3,0.7]$.}
\begin{center}
    \begin{tabular}{| l | l | l | l | l | l| l| l| l | l | l | p{1cm} |}
    \hline
    $\tau$  & .3 & .5  & .53 &  .55 & .58 & .59 & .6 & .62 & .63 & .65 & .7  \\ \hline
  $\mu_p$ & 100 & 87.57 & 81.02 & 63.04 & 55.31 & 50.02 & 35.45 & 32.63 & 22.32 & 18.41 & 0  \\ \hline
  $\mu_s$  & 114.91 & 134.27 & 131.36 & 134.39 & 129.25 & 131.26 & 129.24 & 131.15 & 131.14 & 131.85 & 124.61  \\ \hline
    \end{tabular}
\end{center}
 \label{table:normalPotTable}
\end{table}

In Figure~\ref{fig:PeakVsAll} it is clear that as the peak of the tent
function increases, all experiments with different samples from
different distributions show similar behavior. 

In general, one can first consider the continuous flow generated by
the vector field the potentials generate and then consider the random
walk behavior that results when we (essentially) compute the gradients
asynchronously. These two factors: geometry of state space (determined
by $\tau$) and random walk behavior (from random negotiation order),
give rise these observed results. Because the value of $\tau$
changes the volumes of the attractors of the consensus clusters and
the polarization clusters, a full explanation of these figures will
wait for the next paper.

\begin{figure}[httb!]
  \centering
  \includegraphics[width=.35\linewidth]{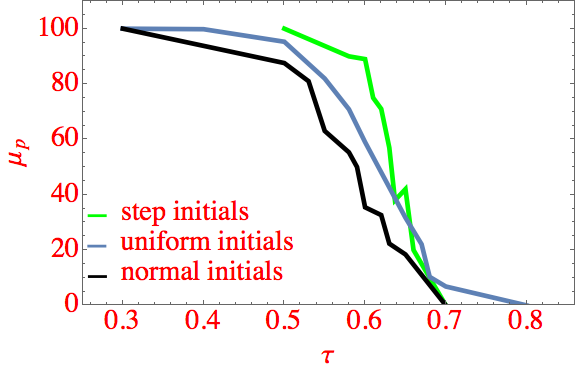}
\caption{Polarization mean vs potential $\tau$ parameter}
\label{fig:PeakVsAll}
\end{figure}

\subsection{Effect of Learning Rate}

In this section everything is fixed except the learning rate, which we
vary to see how it effects the final state of the system and the
stabilization time.  In Figure~\ref{fig:LR-effect} initial opinions and
order of negotiations/conversations are the same, only the learning
rate is varied.  As we can see, the choice of learning rate can cause the
long term equilibrium state to change from a single consensus cluster to
two polarized clusters.  There remains work to be performed to understand
how sensitive an initial opinion state is to such effects as a function
of the potential, learning rate, and distribution of initial opinion
values.

\begin{figure}[httb!]
  \centering
\begin{subfigure}{.35\textwidth}
  \includegraphics[width=1\linewidth]{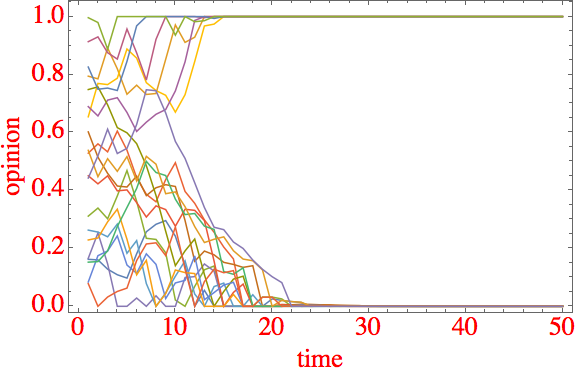}
  \caption{$\alpha = 0.2$ }
  \label{fig:LRS}
\end{subfigure}%
\hspace{.5in}
\begin{subfigure}{.35\textwidth}
  \includegraphics[width=1\linewidth]{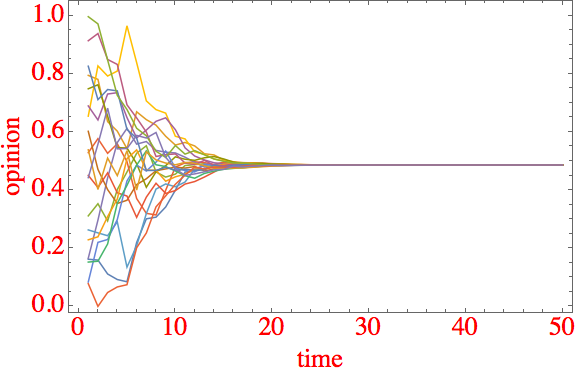}
  \caption{$\alpha = 0.3$ }
  \label{fig:LRL}
\end{subfigure}%
\caption{Learning rate effect}
\label{fig:LR-effect}
\end{figure}

\subsection{Coupling Experiment}
In this section we take a very brief peek at the effect of coupling
strength in a system with two topics and a fully connected network of
20 agents. In these 6 experiments -- one each for 6 different coupling
strengths, each time step is comprised of 10 disjoint (but random)
interactions. The same set of interactions is used for each coupling
strength. The experiments are terminated after 1000 time steps.

Without coupling, topic 1 polarizes and topic 2 converges to
consensus.  Introducing a small amount of (symmetric) coupling between
the topics for all individuals, we see that the consensus state for
topic 2 is destabilized.  It is worth noting that there exist a small number
of individuals that fail to reach a polarized state when coupling occurs.
The exact cause of this behavior is the subject of future work, but we
hypothesize that this is due to a small number of individuals starting
with an initial opinion state with the following properties.  For topic 1,
they start with an opinion that leads to the consensus state with a set of
other individuals.  On topic 2, since everyone moves to consensus there exists
only one cluster in equilibrium.  When coupling is enabled, the topic 2
consensus state is perturbed and polarization occurs.  In the case of the
individuals who oscillate over the long term, they likely are in a situation
where their opinion on topic 2 polarizes with the \emph{opposite} set of
individuals than they cluster with on topic 1.  As such, they experience a
constant tug from each population towards the opposing polarized state and
cannot reach and stay at the polarized opinion state for either topic.



\begin{figure}[htp]
\begin{adjustbox}{varwidth=\textwidth,fbox,center}
\vspace{0.25cm}
\begin{subfigure}{.32\textwidth}
  \vspace{0.25cm}
  \includegraphics[width=1\linewidth]{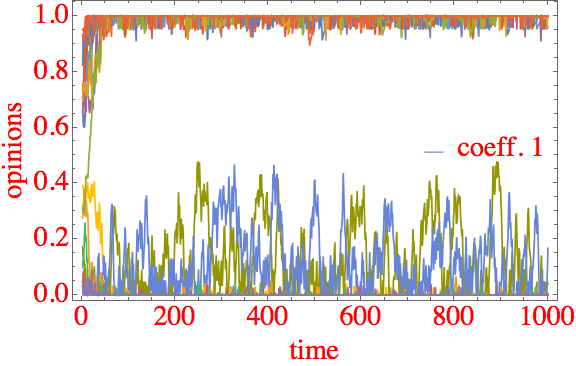}
  \label{fig:NormalPolPeak2}
\end{subfigure}%
\begin{subfigure}{.32\textwidth}
  \vspace{0.25cm}
  \includegraphics[width=1\linewidth]{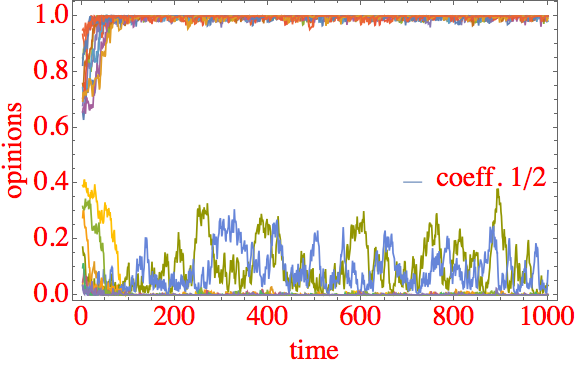}
  \label{fig:NormalPolPeak3}
\end{subfigure}%
\begin{subfigure}{.32\textwidth}
  \vspace{0.25cm}
  \includegraphics[width=1\linewidth]{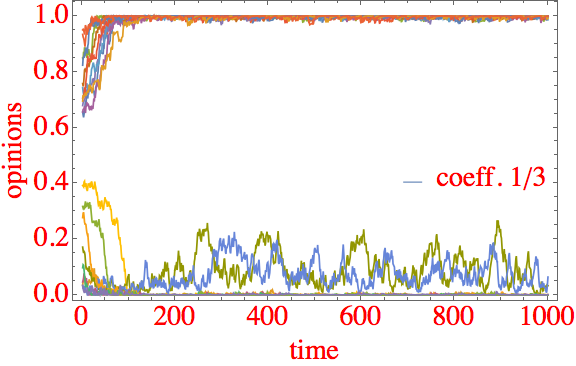}
  \label{fig:NormalPolPeak4}
\end{subfigure}

\begin{subfigure}{.32\textwidth}
  \includegraphics[width=1\linewidth]{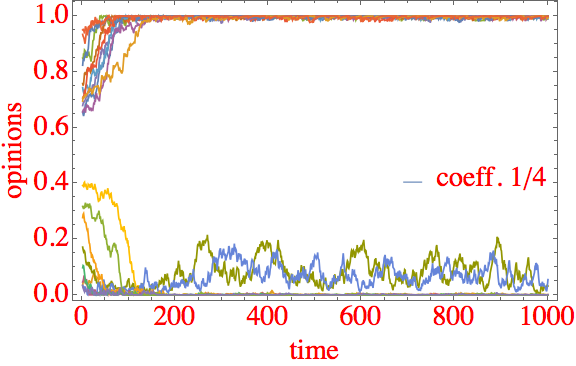}
  \label{fig:NormalPolPeak5}
\end{subfigure}%
\begin{subfigure}{.32\textwidth}
  \includegraphics[width=1\linewidth]{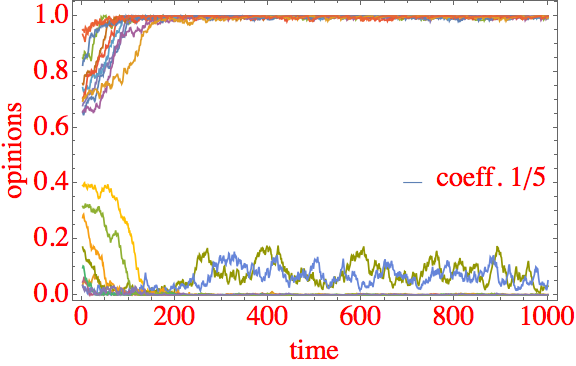}
  \label{fig:NormalPolPeak6}
\end{subfigure}%
\begin{subfigure}{.32\textwidth}
  \includegraphics[width=1\linewidth]{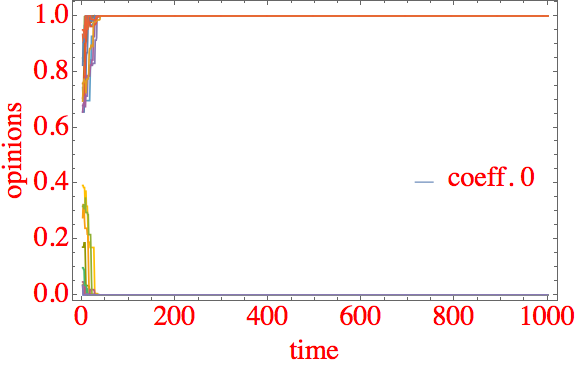}
  \label{fig:NormalPolPeak7}
\end{subfigure}%
\end{adjustbox}
\caption{Different coupling coefficients - topic 1}
\label{topic1Coup}
\end{figure}
\begin{figure}[htp]
\begin{adjustbox}{varwidth=\textwidth,fbox,center}
\begin{subfigure}{.32\textwidth}
  \vspace{0.25cm}
  \includegraphics[width=1\linewidth]{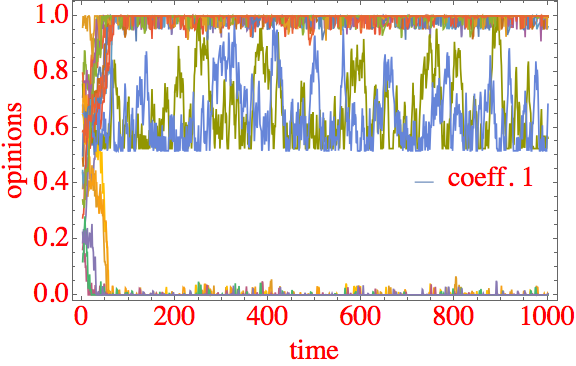}
  \label{fig:NormalPolPeak8}
\end{subfigure}%
\begin{subfigure}{.32\textwidth}
  \vspace{0.25cm}
  \includegraphics[width=1\linewidth]{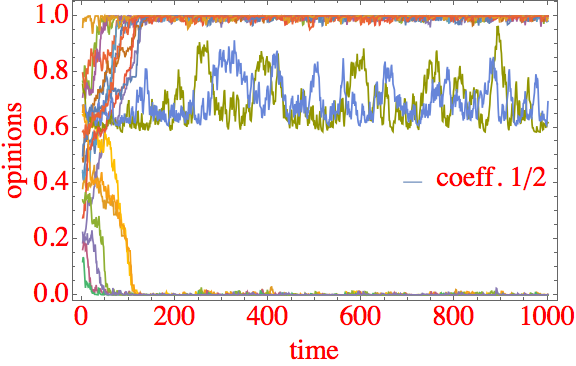}
  \label{fig:NormalPolPeak9}
\end{subfigure}%
\begin{subfigure}{.32\textwidth}
  \vspace{0.25cm}
  \includegraphics[width=1\linewidth]{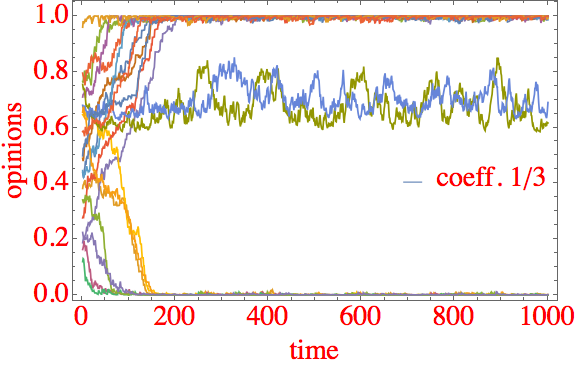}
  \label{fig:NormalPolPeak10}
\end{subfigure}

\begin{subfigure}{.32\textwidth}
  \includegraphics[width=1\linewidth]{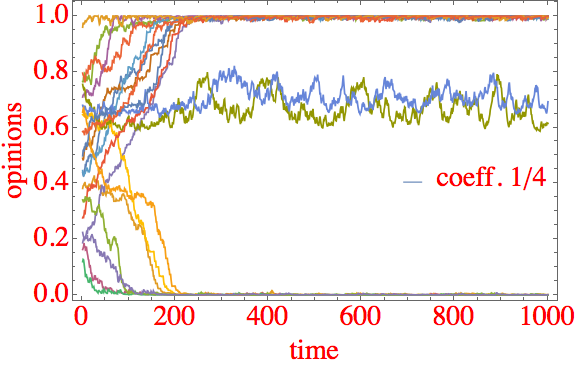}
  \label{fig:NormalPolPeak11}
\end{subfigure}%
\begin{subfigure}{.32\textwidth}
  \includegraphics[width=1\linewidth]{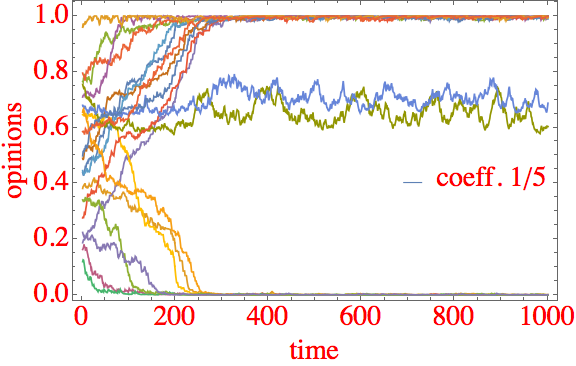}
  \label{fig:NormalPolPeak12}
\end{subfigure}%
\begin{subfigure}{.32\textwidth}
  \includegraphics[width=1\linewidth]{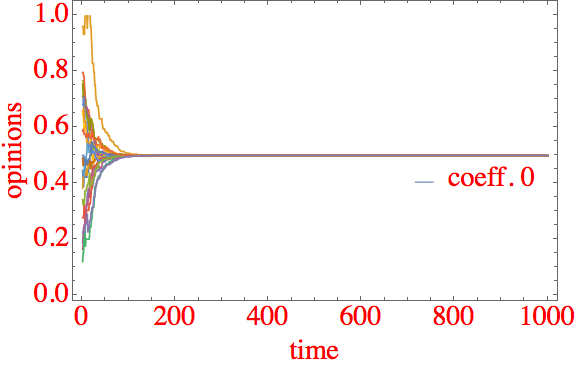}
  \label{fig:NormalPolPeak13}
\end{subfigure}%
\end{adjustbox}
\caption{Different coupling coefficients - topic 2}
\label{topic2Coup}
\end{figure}

\section{Conclusions and future work}

The work presented in this paper introduces a potential based model
with inter-topic coupling.  As defined, the model is relatively general as
we show by embedding other models from the literature in our framework.
There remain a large number of questions that can be studied based on this
work.  We will provide a few noteworthy questions that we identified while
performing this research.  First, a number of recent publications have studied
the impact of network properties (e.g., centrality measures) on dynamical
systems that are dependent on the network.  Furthermore, it is known that
real-world systems that are best represented by a network are often not static
and instead exhibit time-varying properties - both in connectivity as well
as parameters such as edge weights.

As mentioned in places within the paper, we have chosen specific
update criteria for modeling single agent pairs interacting and
sharing opinions.  Other update methods can be created that impose
differing levels of synchronous behavior, differing subpopulations,
and so on.  These will likely have an effect on long term behavior of
the model and should be studied.  One of the more interesting initial
results that we are studying in more detail is the cross-over point
shown in Fig.~(\ref{fig:UniformPolPeak}).  Why do we observe the 50/50
polarization/consensus split occur when the tent $\tau$ parameter is
approximately 0.6?  Can this value be predicted analytically from the
update rules, potential function, and model parameters?

A number of questions can be posed about the long term evolution of
the model.  We can study more about dynamics of the continuous system
such as determining the conditions under which consensus/polarization
occurs. Determining a lower bound on convergence time would be another
interesting question.  In particular, we would like to understand the
long term state reached when individuals oscillate due to coupling of
opinions that disagree with larger subpopulations - does this
oscillation run indefinitely, or does it damp out and eventually move
to a polarized state?  Many of parameters of the models can also
be made dynamic.  For example, we may allow coupling or inter-agent
weights to deviate from some equilibrium state to represent transient
phenomena (e.g., conflicts).

We believe that this work represents a noteworthy accomplishment in
opinion dynamics research for two reasons: it provides a flexible framework
for exploring variant models based on a common core, and by adopting a common
framework we are able to then use theorems about one model to reason about
others.  The common model framework will allow such translation of theorems
and properties from one model to another, shedding light on models that would
be difficult to analyze directly.





\section{Appendix}\label{sec:Appendix}
Here we will show how regions of fixed points look like assuming each person or each edge has a tent potential function assigned to it with $\tau = 0.5$. The regions are given by:
\begin{itemize}
\item \:\: $R_1 = \{0 < o_2 - o_1 < \tau\} \cap \{0 < o_3 - o_2 < \tau\}  \cap \{ \tau < o_3 - o_1 \leq 1 \}$, where $o_1 < o_2 < o_3$.

\item \:\: $R_2 = \{0 < o_1 - o_2 < \tau\} \cap \{0 < o_3 - o_1 < \tau\}  \cap \{ \tau < o_3 - o_2 \leq 1 \}$, where $o_2 < o_1 < o_3$.

\item \:\: $R_3 = \{0 < o_3 - o_1 < \tau\} \cap \{0 < o_2 - o_3 < \tau\}  \cap \{ \tau < o_2 - o_1 \leq 1 \}$, where $o_1 < o_3 < o_2$.

\item \:\: $R_4 = \{0 < o_3 - o_2 < \tau\} \cap \{0 < o_1 - o_3 < \tau\}  \cap \{ \tau < o_1 - o_2 \leq 1 \}$, where  $o_2 < o_3 < o_1$.

\item  \:\: $R_5 = \{0 < o_1 - o_3 < \tau\} \cap \{0 < o_2 - o_1 < \tau\}  \cap \{ \tau < o_2 - o_3 \leq 1 \}$, where $o_3 < o_1 < o_2$.

\item  \:\: $R_6 = \{0 < o_2 - o_3 < \tau\} \cap \{0 < o_1 - o_2 < \tau\}  \cap \{ \tau < o_1 - o_3 \leq 1 \}$, where $o_3 < o_2 < o_1$.

\end{itemize}
Note that in the first two cases, for $R_1$ and $R_2$ ,the condition $ o_1 < o_2$ and $o_2 < o_1$ proves that $R_1 \cap R_2 =$ \O. Similarly, $R_i \cap R_j =$ \O, $\: 1 \leq i,j \leq 6$.\\
In the first case we have \[R_1 = \{0 < o_2 - o_1 < \tau \} \cap \{0 < o_3 - o_2 < \tau \} \cap \{\tau < o_3 - o_1 \leq 1 \} \]
The first part of this region would give us two planes given by 

\begin{equation}
\label{eq:plane1}
P1 \equiv o_2 - o_1 = 0 \:\: \textbf{and} \:\: P2 \equiv o_2 - o_1 = \tau
\end{equation}
This region is drawn in Fig. (\ref{fig:FirstPartofR}). And of course there are the two boundary planes, the cube walls $o_1 = 0$ and $ o_2 = 1 $, which we do not mention it, because the unit cube is the whole space we are existing in, but we would not forget about them. The planes given by Eq.~(\ref{eq:plane1}) is intersected by the two planes from the second region given by:

\begin{equation}
\label{plane2}
P3 \equiv o_3 - o_2 = 0 \:\: \textbf{and} \:\: P4 \equiv o_3 - o_2 = \tau
\end{equation} 

drawn in Fig.~(\ref{fig:SecondPartofR}).

\begin{figure}[httb!]
  \centering
\begin{subfigure}{.35\textwidth}
  \includegraphics[width=1\linewidth]{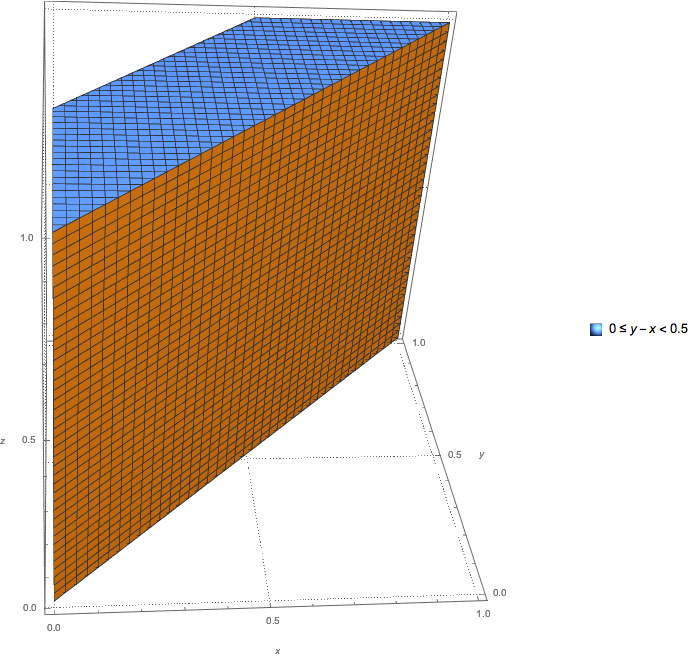}
  \caption{First Part of $R_1$}
  \label{fig:FirstPartofR}
\end{subfigure}%
\hspace{.5in}
\begin{subfigure}{.35\textwidth}
  \includegraphics[width=1\linewidth]{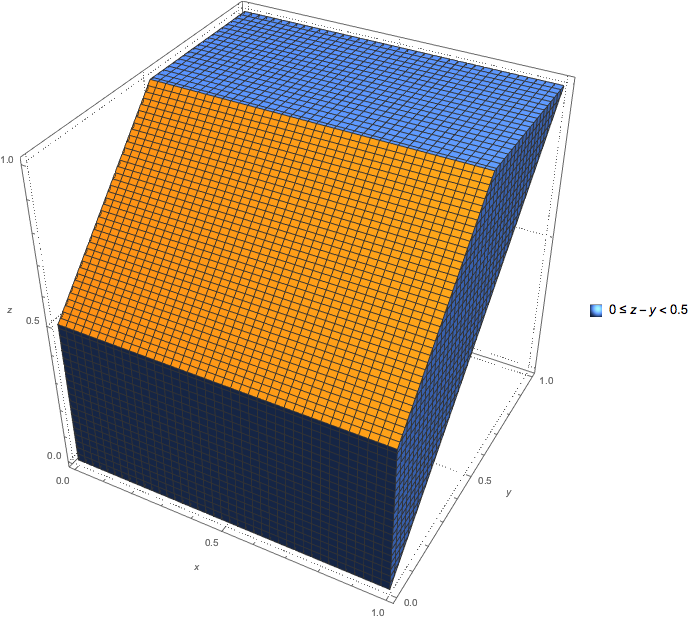}
  \caption{Second Part of $R_1$}
  \label{fig:SecondPartofR}.
\end{subfigure}%
\caption{First two parts of $R_1$}
\label{fig:FirstTwoParts}
\end{figure}
Therefore, $P1$ intersects $P3$ in a line $L_{13}$ and $P4$ in a line $L_{14}$, and the same happens for $P2$.

The normal vector of $P1$ and $P2$ is $\mathbf{n_1} = \mathbf{n_2} =\begin{bmatrix} -1 & 1 & 0 \end{bmatrix}$, and we also have $\mathbf{n_3} = \mathbf{n_4} = \begin{bmatrix} 0 & -1 & 1 \end{bmatrix}$.
Hence, the direction of $L_{13}$, $L_{14}$,  $L_{23}$ and  $L_{24}$ all are the same and is given by \[\mathbf{v} = \mathbf{n_1} \times \mathbf{n_3} =\begin{bmatrix} 1 & 1 & 1 \end{bmatrix} = \mathbf{n_1} \times \mathbf{n_4} = \mathbf{n_2} \times \mathbf{n_3} = \mathbf{n_2} \times \mathbf{n_4}\]
The line $L_{13}$ is given by 
\begin{equation}
\label{border1}
(o_1, \: o_2, \: o_3) = t \:(1, \:1 , \: 1) 
\end{equation}


and $L_{14}$ is given by

\begin{equation}
\label{border2}
(o_1, \: o_2, \: o_3) = (0,0, \: \tau) + t \: (1, \:1 , \: 1)
\end{equation}
the lines $L_{23}$ and  $L_{24}$ are given below respectively:
\begin{equation}
\label{border3}
(o_1, \: o_2, \: o_3) = (0, \: \tau, \: \tau) + t \: (1, \:1 , \: 1) 
\end{equation}
\begin{equation}
\label{border4}
(o_1, \: o_2, \: o_3) = (0,\: \tau , \: 2\tau) + t \: (1, \:1 , \: 1) 
\end{equation}
So, the area surrounded by the first two parts of $R_1$ is given by the four planes given by Eq.~(\ref{eq:plane1}) and Eq.(\ref{plane2}) , borders are given by (\ref{border1}) - (\ref{border4}) and the surfaces of the unit cube. (Fig. \ref{fig:FirstTwoPartsIntersection})

\begin{figure}[httb!]
  \centering
\includegraphics[width=.5\linewidth]{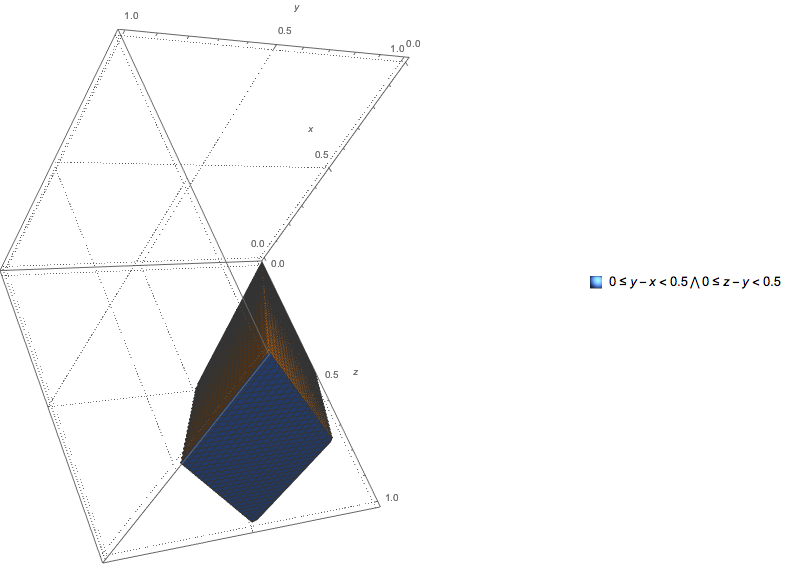}
\caption{First two parts intersection}
\label{fig:FirstTwoPartsIntersection}
\end{figure}

This area is cut by two other planes from the third part of defining $R_1$, $\{\tau < o_3 - o_1 \leq 1\}.$
\begin{equation}
\label{plane56}
P5 \equiv o_3 - o_1 = \tau \:\: \textbf{and} \:\:P6 \equiv o_3 - o_1 = 1
\end{equation}

Other than the cube's surfaces, the box has 4 sides inside the cube. Let's find the intersection of $P5$ with those 4 sides of the box. We know 
\[\mathbf{n_1} = \mathbf{n_2} = \begin{bmatrix} -1 & 1 & 0 \end{bmatrix},\: \: \mathbf{n_3} = \mathbf{n_4} = \begin{bmatrix} 0 & -1 & 1 \end{bmatrix} \: \text{and} \: \:\mathbf{n_5} =\begin{bmatrix} -1 & 0 & 1 \end{bmatrix} \]
Consequently, 
\[\mathbf{n_1} \times \mathbf{n_5} = \begin{bmatrix} 1 & 1 & 1 \end{bmatrix} \:\:\text{and }\mathbf{n_3} \times \mathbf{n_5} = \begin{bmatrix} -1 & -1 & -1 \end{bmatrix} \]
and we get:
 \[\left\{  \begin{array}{lllllll}
 L_{15} &\equiv& \:  (0,\: 0,\: \tau)    &+& t \: (1, \:1, \:1)\\
L_{25} &\equiv& \:  (0,\: \tau,\: \tau) &+& t \: (1, \:1, \:1)\\
L_{35} &\equiv& \:  (-\tau,\: 0,\: 0)   &+& t \: (1, \:1, \:1)\\
L_{45} &\equiv& \:  (0, \: 0,\: \tau)   &+& t \: (1, \:1, \:1)\\
  \end{array} \right.\]
Hence, the region is given by:
\[\left\{  \begin{array}{lllllll}
P1 \text{ from \:} L_{13} &\text{to}& L_{14}\\
P2 \text{ from \:} L_{23} &\text{to}& L_{24}\\
P3 \text{ from \:} L_{13} &\text{to}& L_{23}\\
P4 \text{ from \:} L_{14} &\text{to}& L_{24}\\
  \end{array} \right.\]
This box is cut by the fifth plane, $o_3 - o_1 = \tau $. Observe that $L_{14} = L_{15} = L_{45}$ and $L_{23} = L_{25} = L_{35}$. The new box has five sides to it of which two are the unit cube walls, and the other three are P5, P2 and P4. (Fig. \ref{fig:R1})

\begin{figure}[httb!]
  \centering
\begin{subfigure}{.4\textwidth}
  \includegraphics[width=1\linewidth]{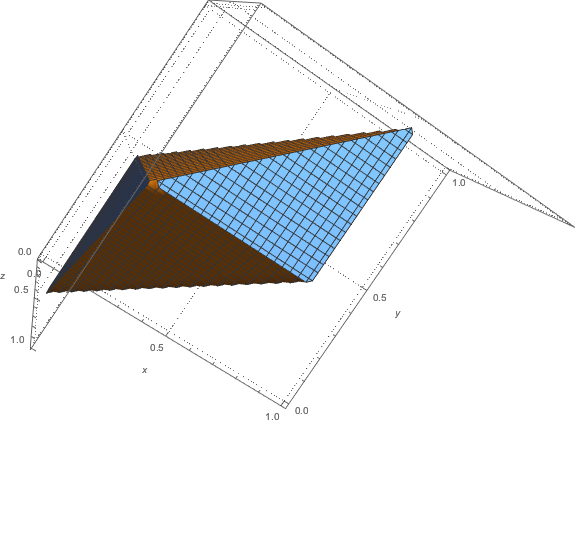}
  \caption{ $R_1$-Angle One}
  \label{fig:R1AngleOne}
\end{subfigure}%
\hspace{.5in}
\begin{subfigure}{.4\textwidth}
  \includegraphics[width=1\linewidth]{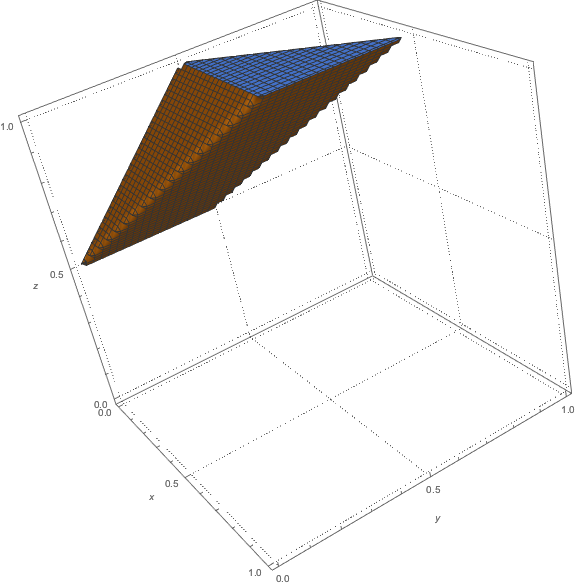}
  \caption{$R_1$-Angle Two}
  \label{fig:R1AngleTwo}
\end{subfigure}%
\caption{ The $R_1$ region}
\label{fig:R1}
\end{figure}
\pagebreak
\bibliographystyle{plain}
\bibliography{MyCollection}

\end{document}